\begin{document}
\mainmatter              % start of a contribution
\title{Relative binary and ternary 4D velocities \\ in the Special Relativity
in terms of manifestly \\ covariant Lorentz transformation}
\titlerunning{Relative binary and ternary 4D velocities in terms of covariant Lorentz transformation}  % abbreviated title (for running head)
%                                     also used for the TOC unless
%                                     \toctitle is used
%
\author{Grzegorz Marcin Koczan}
\authorrunning{Grzegorz M. Koczan, GOL 2021 conference in memoriam Z. Oziewicz, Mexico} % abbreviated author list (for running head)
%
%%%% list of authors for the TOC (use if author list has to be modified)
\tocauthor{Grzegorz Marcin Koczan}
\institute{Warsaw University of Life Sciences (PhD),\\
Faculty of Physics at the University of Warsaw (MSc)\\
\email{grzegorz\_koczan1@sggw.edu.pl, grzegorz.koczan@fuw.edu.pl}
}

\maketitle              % typeset the title of the contribution

\begin{abstract}
Zbigniew Oziewicz was a pioneer of the 4D space-time approach to covariant relative velocities. In the 1980s (according to private correspondence) he discovered two types of 4D relative velocities: binary and ternary, along with the rules for adding them. They were first published in conference materials in 2004, and the second time in a peer-reviewed journal in 2007. These physically logical and mathematically precise concepts are so subtle that Oziewicz's numerous preprints have yet to receive the recognition they deserve. 

This work was planned to be a more review, but a thorough review of the little-known results was made in an original synthetic manner with numerous generalizations. The work presents the Oziewicz--Świerk--Bolós (and Matolcsi or Bini--Carini--Jantzen) binary relative velocity and the Oziewicz--Ungar--Dragan (also Celakoska--Chakmakov--Petrushevski on the basis of Urbantke, as well as Wyk) canonical ternary relative velocity. The Einstein--Oziewicz and Einstein--Minkowski relative velocities, which are a four-dimensional proper generalizations of Einstein relative velocity, also have a ternary character. This also applies to Oziewicz--Minkowski relative velocity, which is a time-like equivalent (generalization) of the canonical ternary velocity. It turns out that the Lorentz transformation of velocity itself is already ternary anchored. This fact is explicitly revealed by the manifestly covariant Lorentz transformation of velocity, which is the main tool of the work.

\keywords{Ternary Special Relativity (TSR), 4D Einstein--Oziewicz relative velocity, 4D Einstein--Minkowski relative velocity, 4D Oziewicz--Minkowski relative velocity, covariant Lorentz transformation}
\end{abstract}
\textit{The fundamental property of velocities is the ability to add them.}
\begin{flushright}
Zbigniew Oziewicz (18 November 2020, online conversation)
\end{flushright}

\subsubsection{Introduction}

One of the pillars of Special Relativity (SR) is Einstein's principle of relativity or the related postulate of the Lorentz covariance of physics equations. Unfortunately, the postulate of covariance is often incorrectly called invariance (\cite{Herr,Ivezic} vs. \cite{Fahnline}).  Although the covariance requirement is widely acknowledged and emphasized, fundamental equations such as the Lorentz transformation and velocities composition are relieved of this requirement. This means that the mathematical formulas that describe them do not have a manifestly covariant form based on four-vectors. On the one hand, this condition results from the lack of popularization of the results of the manifestly covariant Lorentz transformation obtained in the last century by Bażański in 1965  \cite{Bazanski}, Krause in 1977 and 1978 \cite{Krause 1,Krause 2} and moreover by Fahnline in 1982 \cite{Fahnline}. On the other hand, this state of science is due to a misunderstanding and ignorance of the dependence of the Lorentz transformation on some additional direction in space-time. This direction, in the form of a time-like vector, appears explicitly in the aforementioned works by Krause and Fahnline, but is exposed to erroneous censorship in terms of an Æther that violates the principle of relativity. In this way, we have come to the bizarre apparent contradiction that the covariant Lorentz transformation breaks the basis on which it is based -- the principle of relativity -- due to the dependence on the distinguished time-like vector.

The explanation of this paradox and the further development of subtleties of a similar nature is provided by the research program of Professor Zbigniew Oziewicz  (see e.g. \cite{Oziewicz 3,Oziewicz Page}). This research program will be called Ternary Special Relativity (TSR) here, as it is based on the realization of the ternary nature of relative velocities in Special Relativity.  It should be emphasized that TSR is a developmental part of the SR and does not constitute an alternative content to the SR, but deepens its understanding in a new conceptual language. The concept of TSR was born thanks to the application of the methodology of the category theory to the kinematics of SR.  In simple terms, it was about morphisms and isomorphisms (called arrows) generated by Lorentz boosts or relativistic composition of velocities \cite{Oziewicz 2}. Although the graphs of the ternary relation of motion (\cite{Oziewicz 2}:Fig.1, \cite{Oziewicz 1}:Fig.1--3, \cite{Dragan}:Fig.3.2) look like an example of a category theory approach, this work is written in 4D covariant algebra. However, both in the covariant approach and in the category theory, the most overt independence of the choice of the coordinate system is postulated.  It should be noted, however, that TSR in a way separates the choice of coordinates from the choice of the observer -- which in a certain sense doubles the concept of relativity. Therefore, if we have Cartesian coordinates, the observer and the body (or the observer and two bodies), then we can speak of a ternary system needed to study ternary relations of motion. 

The discovery of the ternary nature of the relativistic composition of velocities raised in Oziewicz a natural question about the possibility of breaking the hegemony of the Lorentz group down to the level of a particular kind of abstract Æther \cite{Oziewicz eter}. I believe that in the correspondence with Oziewicz it was possible to work out a certain interpretative consensus on this matter. As part of this consensus, I proposed in correspondence with Oziewicz the term Oziewicz's Æther for each of the two 2D planes distinguished by a Lorentz boost.  Oziewicz's spacetime-like Æther is a 2D hyperplane in which Lorentz boost works, while Oziewicz's space-like Æther is a 2D surface, which Lorentz boost does not change. Both such 2D Oziewicz's Æthers are not any material distinguished frames of reference, but they constitute a certain subtle distinguishing, which is introduced by the mere Lorentz boost.  Therefore, according to the author, TSR does not break the Lorentz group and the SR itself.  Nevertheless, Oziewicz himself expressed some doubts in this respect  \cite{Oziewicz 2,Oziewicz 4}. These doubts were mainly based on the lack of  associativity for relativistic velocities addition and the multitude of relative zero velocities.  The lack of  associativity of standard relativistic velocities addition is an indisputable fact \cite{Ungar,Kocik}, but it concerns the pseudo-group (groupoid or loop) of velocities, and not the entire Lorentz group, which includes, next to Lorentz's boosts, also rotations. Other aspects of non-associativity in physics were cited in \cite{Sbitneva}.   

This does not mean that the Thomas-Wigner rotation accompanying composition of non-parallel velocities depreciates the mathematical structure for composition velocities.  Oziewicz called this properly interpreted structure a relativistic groupoid \cite{Oziewicz 2}, which he argued (among others) by the aforementioned multitude of relative zero velocities. Others refer to it as the Lobachevsky--Einstein velocity space or a gyrogroup.  Investigating this structure, however called, does not constitute a breach of the Lorentz group, whether or not we agree with the somewhat exotic multi-zero argument or other argument.  In other words, according to the author, the TSR should be treated as an original and developmental, but orthodox approach to the SR.  Zbigniew Oziewicz, the creator of this program, was looking for new solutions and perhaps he would not literally agree with the above opinion. Undoubtedly, Oziewicz introduced a new concepts within TSR and SR, and whether or not we treat TSR as going beyond the SR or not is of secondary importance. As already suggested, the author of the article classifies TSR as SR without harming any of them.

The only more widely known competitive approach to TSR (including the Oziewicz relativistic groupoid) is the aforementioned Ungar's gyrogroup. However, according to the author, the idea of a gyrogroup is less innovative and less fundamental than TSR. The gyrogroup is based on the hyperbolic geometry of the relativistic Lobachevsky--Einstein velocity space (see appendix in \cite{RinP}), the foundations of which were already described in 1911 by Vari\'cak, and further developed e.g.  by Fock in the 1960s.  Moreover, the gyrogroup did not recognize such fundamental concepts as the covariant binary and ternary relative velocities. Strictly speaking, Ungar in his very extensive research introduced the equivalent of ternary velocity, but did not recognize its essential role and ternary character.  Ungar's greatest discovery, as Oziewicz himself emphasized, is the discovery of the lack of associativity for the relativistic addition of velocities \cite{Ungar}. Both Ungar and Oziewicz proposed some concepts of a weaker version of associativity as part of their own approach to the velocities composition problem. 

It's time to define the basics of TSR in 4D covariant vectors algebra.  Consider two material points to which 4D absolute velocities are assigned, respectively: $\nu=\{\nu^{\alpha}\}$ and $\mu=\{\mu^{\alpha}\}$. These are time-like vectors directed to the future, which in the signature $+---$ are normalized in a positive sense to the speed of light $\nu \circ \nu=\nu^{\alpha}\nu_{\alpha}=+c^2=\mu \circ \mu$. These will be referred to here as Minkowski 4D absolute velocities or simply Minkowski velocities. Minkowski velocity is absolute in the sense that it does not distinguish other body than the body it describes. The concept of absolute velocity only makes sense in space-time (4D), while in ordinary space (3D without time and without ether) it doesn't make sense anymore. The name four-velocity is commonly used for them, which Oziewicz did not like. Oziewicz, following Minkowski, thought that the $\nu^{\alpha},\mu^{\alpha}$ objects describe material points (material bodies) as such, and do not constitute their velocities. In a way, he was right, but in the sense of properly scaled spatial components, Minkowski time-like vectors do describe velocities.   

On the basis of Minkowski velocities of two material points, it is possible to build a general concept of relative velocity -- first of the binary type:
\begin{equation}
\label{binary}
    \mathcal{B}^{\alpha}(\nu, \mu)=\lambda_2(\nu,\mu) \mu^{\alpha}-\lambda_1(\nu,\mu) \nu^{\alpha}.
\end{equation}
Binary velocity, like any relative velocity, is a certain subtraction of component velocities. Such subtraction in relativism requires at least one non-trivial “linear” combination coefficient. In the case of the canonical Oziewicz--Świerk--Bol\'os binary velocity $\omega^{\alpha}$ only the $\lambda_2$ coefficient is non-trivial.  Any binary velocity should be a space-like vector normalized (in a negative sense) to a speed lower than the speed of light ($\mathcal{B}^{\alpha}\mathcal{B}_{\alpha}=-B^2,0\leq B<c$). 

If material points with Minkowski velocities $\nu^{\alpha}, \mu^{\alpha}$ are observed by an observer with Minkowski velocity $\sigma^{\alpha}$, then the relative velocity of the ternary type can be considered: 
\begin{equation}
\label{general ternary}
    \mathcal{T}^{\alpha}(\sigma, \nu, \mu)=f_2(\sigma, \nu,\mu) \mu^{\alpha}-f_1(\sigma, \nu,\mu) \nu^{\alpha}-f_0(\sigma, \nu,\mu) \sigma^{\alpha}.
\end{equation}
The point is that we want to relate (transfer) $\mu^{\alpha}$ velocity relative to $\nu^{\alpha}$ to the $\sigma^{\alpha}$ system. In practice, such a reference requires orthogonality ($\mathcal{T}^{\alpha}\sigma_{\alpha}=0$), which partially explains the difference of $\mathcal{T}^{\alpha}$ compared to $\mathcal{B}^{\alpha}$. The fulfillment of this condition can be guaranteed automatically by a “linear” combination of canonical binary velocities:
\begin{equation}
    \mathcal{T}^{\alpha}=g_2 \omega^{\alpha}_2(\sigma,\mu)-g_1 \omega^{\alpha}_1(\sigma,\nu)=g_2 \frac{c^2}{\sigma\circ\mu}\mu^{\alpha}-g_1 \frac{c^2}{\sigma\circ\nu}\nu^{\alpha}-g_0\sigma^{\alpha}.
\end{equation}
Such a notation only has the coefficients $g_0,g_1,g_2$ scaled with respect to $f_0,f_1,f_2$, which simplifies the orthogonality condition to $g_0=g_2-g_1$.  The canonical ternary velocity of Oziewicz--Ungar--Dragan $\xi^{\alpha}_{(\sigma)}$, however, has equal two coefficients of the first kind  $f_1=f_2$, not the second kind  ($g_1$ and $g_2$). Of course, any ternary velocity should come down to the binary velocity for  $\sigma^{\alpha}=\nu^{\alpha}$.

In further research pseudo-binary velocities  also considered.  A pseudo-binary velocity is a ternary velocity in which a principal coefficient is simple, i.e. unitary (e.g. $g_1=1$ or $g_2=1$), or is a function of only two, not three, Minkowski velocities.  The pseudo-binary velocity will also be a velocity similar to (\ref{binary}), at least one coefficient of which will depend on the third Minkowski velocity  $\sigma^{\alpha}$. The term pseudo-binarity refers to ternarity that is not fully manifested, and here it has nothing to do with the change of signs on reflections (as in pseudo-tensors). The best example of a pseudo-binary velocity from further research will be the cross velocity $\varpi^{\alpha}$. Another example would be axial velocity  $\pi^{\alpha}$, which is a 4D generalization of the relative velocity of Fernándeuz--Guasti (and the author). The pseudo-binarity requirements are also met (in sense of norm) by the velocity of Einstein--Oziewicz $\varepsilon^{\alpha}$, consideration in this work, which generalizing to 4D the Einstein relativistic composition of velocity. We assume that pseudo-binary velocity, like binary and canonical ternary, is a space-like vector  ($\varpi^{\alpha}\varpi_{\alpha}<0,\  \pi^{\alpha}\pi_{\alpha}<0, \  \varepsilon^{\alpha}\varepsilon_{\alpha}<0$).  A description of further research of pseudo-binary follows after the conclusion of this work.

An interesting fact is the existence of a time-like ($\beta^{\alpha}\beta_{\alpha}=+c^2>0$) relative velocity of Einstein--Minkowski $\beta^{\alpha}$ related to $\varepsilon^{\alpha}$ and with the covariant Lorentz transformation.  This velocity is therefore a covariant Lorentz transformation of Minkowski velocity. It resembles ternary velocity or pseudo-binary velocity, but is a time-like vector.

Some of the following precise conditions are imposed on the relative covariant velocities (binary, ternary, pseudo-binary) occurring in this work: 
\begin{enumerate}[i.]
     \item Compliance of velocities composition with the Lorentz transformation (Einstein composition) for the component parallel to the boost velocity (in the 3D sense).
    \item Compatibility of velocities composition with the Lorentz transformation (Einstein composition) only in the case of parallel velocities (in the 3D sense).
    \item Corresponding compatibility of the covariant active Lorentz boost of velocity $\nu^{\alpha}$ performed with some (ternary) relative velocity (agreement in the active sense).
    \item Equality of the 4D relative velocity with the value of some covariant passive Lorentz boost of velocity $\mu^{\alpha}$ (agreement in the resting sense).
    \item Orthogonality to Minkowski velocity  reference ($\mathcal{B}^{\alpha}\nu_{\alpha}=0$ or $\mathcal{T}^{\alpha}\sigma_{\alpha}=0$).
    \item Compatibility with 3D velocity corresponding to $\mu^{\alpha}$ when $\nu^{i}=0$.
    \item The triviality of one of the main coefficients ($f_1=1$ or $f_2=1$ or $g_1=1$ or $g_2=1$).
    \item  Antisymmetry, i.e. the equality of the main coefficients  ($f_1=f_2$). 
\end{enumerate}

The work describes the individual relative velocities generally in separate sections. These sections refer to the above conditions, identifying specific ones and, if necessary, further specifying the determining conditions.  Sections 1 and 2 were intended to be essentially a review, but ultimately have some authorial input. Whereas section 3 is more original. However, the covariant Lorentz transformation (including indirectly $\beta^{\alpha}$ velocity) has been found in several sources. On the other hand, the four-dimensional velocity $\varepsilon^{\alpha}$ has not been strictly recognized in the literature. Probably even Oziewicz wrongly identified the velocity of Einstein type $\varepsilon^{\alpha}$ (\cite{Oziewicz 0}:(7.1), \cite{Oziewicz 1.5}:24--25) with the ternary velocity $\xi^{\alpha}$ (\cite{Oziewicz 4}:(9.2),(16.10)). In addition to determining relative velocities, the article deals with the issue of their addition in accordance with the article's motto, which is a quote of prof. Zbigniew Oziewicz.  

\section{The binary 4D relative velocity of Oziewicz--Świerk--Bol\'os and its transitive-type addition with ternary properties}

If the conditions v. and vi. are set for the binary velocity (\ref{binary}), then we can easily obtain the canonical binary velocity of Oziewicz--Świerk--Bol\'os (see \cite{Oziewicz 1,Oziewicz 2,Swierk,Bolos,APP A} and Fig.\ref{binaryfig}): 
\begin{equation}
\label{binary OB}
    \omega^{\alpha}(\nu, \mu)=\frac{c^2}{\nu\circ\mu} \mu^{\alpha}-\nu^{\alpha}=:(\mu\dsub\nu)^{\alpha}.
\end{equation}
Condition vi. could be replaced by condition vii. in the form $\lambda_1=1$. On the other hand, the $\lambda_2$ coefficient is equal to the reciprocal of the invariant Lorentz--Oziewicz gamma coefficient for the binary velocity \cite{RinP}:
\begin{equation}
    \gamma_{12}=\gamma_{21}=\frac{\nu\circ\mu}{c^2}=\frac{1}{\sqrt{1+\omega\circ\omega/c^2}}=\frac{1}{\lambda_2}.
\end{equation}
As it is easy to see the coefficients $\gamma_ {12}, \lambda_2$ are dependent on the scalar square of the space-like binary velocity  $\omega\circ\omega<0$. 

\begin{figure}[h!]
\centering
%\vspace{2.5cm}
\includegraphics[width=8cm]{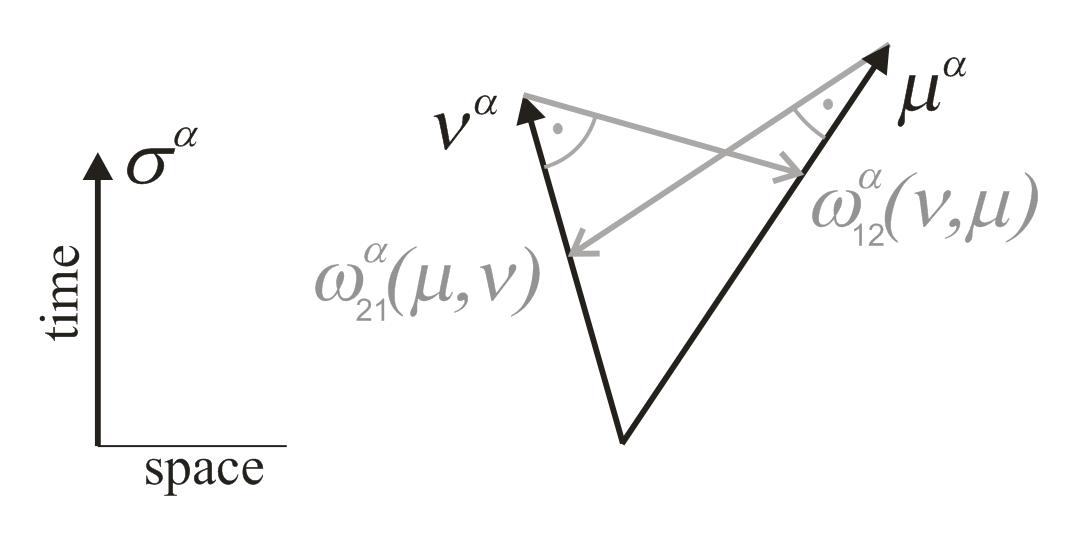}
\caption{Simple Bini--Carini--Jantzen's geometric construction \cite{Bini,poster 2} of 4D relative velocity $\omega^{\alpha}_{12}(\nu,\mu)$ called Oziewicz--Świerk--Bol\'os binary velocity \cite{Oziewicz 1,Oziewicz 2,Swierk,Bolos,APP A}, along with the inverse binary velocity $\omega^{\alpha}_{21}(\mu,\nu)$, in the Minkowski space-time diagram. The constructed relative velocities of two bodies with absolute Minkowski velocities $\nu^{\alpha}$ and $\mu^{\alpha}$ are determined by the appropriate orthogonal line segments to these Minkowski velocities. Both binary velocities have equal values, but indicate different directions in space-time.}
\label{binaryfig}
\end{figure}

It turns out that the conditions i. and ii. are not fulfilled here, because for parallel velocities in the sense of 3D, the binary velocity is $\gamma_{v}=1/\sqrt{1-v^2/c^2}$ times greater (wherein $\nu^i=\gamma_vv^i$). This means that canonical binary velocity can be written as  \cite{APP A}:
\begin{equation}
    \omega^{\alpha}=(\omega^0,\vec{\omega})=(\gamma_v \vec{w}\cdot\vec{v}/c,\gamma_v\vec{w}),
\end{equation}
where $\vec{w}=\vec{\omega}/\gamma_v$ is 3D binary velocity \cite{APP A} or jet velocity \cite{RinP}. The $\vec{w}$ velocity satisfies the full version of condition i. (not the limited condition ii.). In further research the 4D generalization of pseudo-binary velocity $\vec{w}$ will be given, called the cross velocity $\varpi^{\alpha}$.

Binary velocity in the form of a 4D vector (\ref{binary OB}) appeared in 1988 in Świerk's master thesis \cite{Swierk}, the originator and promoter of which was Oziewicz. Nevertheless Oziewicz first published the canonical binary velocity only in 2004 \cite{Oziewicz 1} (perhaps it was blocked by reviewers \cite{Oziewicz Reviewers}). A year later, the same velocity was published by Bol\'os \cite{Bolos}, but for General Relativity. In a similar context, binary velocity (without a specific name) appears in the newer work of \cite{Korzynski}. On the other hand, in older works, such as Hestenes \cite{Hestenes} from 1974, relative velocity appears in the form of a bivector, not a vector.   As evidence of the fact that the important velocity vector $\omega^{\alpha}$ is still not widely known, let's consider the article written by a recognized relativist \cite{Abramowicz}. This article comes close to canonical binary velocity but does not define its proper normalization. Despite the failure to fulfill the condition i. the space-time normalization is correct and is equal to the norm for Einstein composition $\omega^{\alpha}\omega_{\alpha}=\varepsilon^{\alpha}\varepsilon_{\alpha}=-|\vec{\mathcal{E}}|^2$ (see Statement 2 in \cite{APP A}). The author also came (by chance) to binary velocity in 2019/2021 in the relativistic equation of motion of a body with a variable own mass \cite{RinP}. The idea of binary velocity apparently also appears in the 1995 work by Bini, Carini and Jantzen \cite{Bini}. A distractor to find this work was to refer to the General Relativity (also \cite{Bolos,Korzynski}), when the essence of the problem belongs to the Special Relativity. In this context, binary velocity was introduced two years earlier (1993) by Matolcsi \cite{Mat0}, which he also described in his next work with Goher \cite{Mat}. Binary velocity also appears in the newer \cite{Celakoska 4} work.

The binary relative velocity makes it possible to express the Minkowski velocity of the second material point in terms of the Minkowski velocity of the first point: 
\begin{equation}
\label{binary plus}
    \mu^{\alpha}(\omega, \nu)=\frac{\omega^{\alpha}+\nu^{\alpha}}{\sqrt{1+\omega\circ\omega/c^2}}=:(\omega\ \langle + | \nu)^{\alpha}.
\end{equation}
This equation results from a simple transform (\ref{binary OB}) and the resulting square $\omega\circ\omega=c^6/(\sigma\circ\nu)^2-c^2$. 
The operations $\dsub$ (\ref{binary OB}) and $\langle +|$ (\ref{binary plus}) are not internal operations. The operation $\dsub$ is specified on Minkowski time-like velocities, but its value is a space-like binary velocity. And $\langle+|$ works between binary and Minkowski velocity, and the value is Minkowski velocity.

Now consider composition of the binary velocities themselves. Some kind of transitivity can be used in the definition of such composition:
\begin{equation}
    \big(\omega_{10}(\nu,\sigma)\boxplus_{B}^{\sigma}\omega_{02}(\sigma,\mu)\big)^{\alpha}:=\omega^{\alpha}_{12}(\nu,\mu),
\end{equation}
where: $\sigma=\{\sigma^{\alpha}\}$ -- the Minkowski velocity of intermediate (third) material point; $10,01,12$ -- subscripts distinguish between three binary velocities and indicate a reference relation to pairs of material points (analogous to Fig. \ref{binaryfig}). The designation of the addition of binary velocities $\boxplus_{B}^{\sigma}$ suggests its dependence on an additional parameter, which was select Minkowski velocity of the material middle point. Equivalently, a different point could be selected or the inverse velocity $\omega^{-1}_{10}(\nu,\sigma):=\omega_{01}(\sigma,\nu)\neq-\omega_{10}(\nu,\sigma)$ alternately (compare with Fig. \ref{binaryfig}). Oziewicz used the latter choice. 
\begin{theorem}
Composition (adding) two binary velocities $\omega_{10}^{\alpha}$ and $\omega_{02}^{\alpha}$, meeting the conditions of composition $\sigma\circ\omega_{10}/c^2=\gamma_{10}^{-1}-\gamma_{10}$ and $\sigma\circ\omega_{02}=0$, is a ternary expression depending on these two velocities and on the Minkowski velocity $\sigma^{\alpha}$ of the middle material point and is given by the formula:
\begin{equation}
    (\omega_{10}\boxplus_{B}^{\sigma}\omega_{02})^{\alpha}:=\omega_{10}^{\alpha}+\frac{\omega_{02}^{\alpha}+\sigma^{\alpha}}{\gamma_{10}-\omega_{10}\circ\omega_{02}/c^2}-\frac{\sigma^{\alpha}}{\gamma_{10}},
\end{equation}
where the invariant Lorentz--Oziewicz factor has the form  $\gamma_{10}=1/\sqrt{1+\omega_{10}\circ\omega_{10}/c^2}=\gamma_{01}$.
\end{theorem}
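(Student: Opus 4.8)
The plan is to invert the two binary-velocity definitions so as to recover the Minkowski velocities $\nu^{\alpha}$ and $\mu^{\alpha}$, substitute them into the canonical binary velocity $\omega_{12}^{\alpha}(\nu,\mu)$ that \emph{defines} the composition by transitivity, and then collapse the resulting scalar coefficients using the two stated composition conditions. First, from the definition of $\omega_{02}^{\alpha}(\sigma,\mu)=\frac{c^2}{\sigma\circ\mu}\mu^{\alpha}-\sigma^{\alpha}$, the analogue of (\ref{binary OB}) with reference $\sigma$, I would solve for $\mu^{\alpha}$; writing $\gamma_{02}=\sigma\circ\mu/c^2$ this is exactly the $\langle+|$ reconstruction (\ref{binary plus}), namely $\mu^{\alpha}=\gamma_{02}\big(\omega_{02}^{\alpha}+\sigma^{\alpha}\big)$. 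Likewise, rearranging $\omega_{10}^{\alpha}(\nu,\sigma)=\frac{c^2}{\nu\circ\sigma}\sigma^{\alpha}-\nu^{\alpha}$ with $\gamma_{10}=\nu\circ\sigma/c^2$ gives $\nu^{\alpha}=\gamma_{10}^{-1}\sigma^{\alpha}-\omega_{10}^{\alpha}$.

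The crux is then to evaluate the invariant $\nu\circ\mu$, since the defining velocity carries the coefficient $c^2/(\nu\circ\mu)$. Inserting the two reconstructions yields
\[ \nu\circ\mu=\gamma_{02}\big(\gamma_{10}^{-1}\,\sigma\circ\omega_{02}+\gamma_{10}^{-1}c^2-\omega_{10}\circ\omega_{02}-\sigma\circ\omega_{10}\big), \]
and here the two composition hypotheses do all the work: the orthogonality $\sigma\circ\omega_{02}=0$ kills the first term, while $\sigma\circ\omega_{10}=c^2\big(\gamma_{10}^{-1}-\gamma_{10}\big)$ cancels the $\gamma_{10}^{-1}c^2$ contribution against part of the last term, leaving the clean result $\nu\circ\mu=\gamma_{02}c^2\big(\gamma_{10}-\omega_{10}\circ\omega_{02}/c^2\big)$. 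I expect this cancellation --- and in particular the eventual disappearance of the otherwise awkward overall factor $\gamma_{02}$ --- to be the main obstacle, in the sense that it is precisely the point at which both stated conditions must be invoked, and it is what forces the denominator to emerge as $\gamma_{10}-\omega_{10}\circ\omega_{02}/c^2$ rather than as some expression still carrying the hidden dependence on $\mu$.

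Finally I would substitute back into $\omega_{12}^{\alpha}=\frac{c^2}{\nu\circ\mu}\mu^{\alpha}-\nu^{\alpha}$. The first term becomes $\frac{c^2}{\nu\circ\mu}\gamma_{02}\big(\omega_{02}^{\alpha}+\sigma^{\alpha}\big)$, and with the value of $\nu\circ\mu$ just obtained the factor $\gamma_{02}$ cancels, so this collapses to $\big(\omega_{02}^{\alpha}+\sigma^{\alpha}\big)/\big(\gamma_{10}-\omega_{10}\circ\omega_{02}/c^2\big)$; the second term contributes $-\nu^{\alpha}=\omega_{10}^{\alpha}-\sigma^{\alpha}/\gamma_{10}$. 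Adding the two reproduces the asserted formula exactly, and the appearance of $\sigma^{\alpha}$ weighted by both the $\gamma_{10}$ and the $\big(\gamma_{10}-\omega_{10}\circ\omega_{02}/c^2\big)$ denominators makes manifest the ternary dependence on the middle velocity announced in the statement. As a consistency check I would set $\sigma=\nu$, whence $\omega_{10}=0$ and $\gamma_{10}=1$: the formula collapses to $\omega_{02}^{\alpha}=\omega_{12}^{\alpha}(\nu,\mu)$, confirming the expected reduction of the composition to the plain binary velocity.
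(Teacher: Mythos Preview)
Your proposal is correct and follows essentially the same route as the paper's own proof: reconstruct $\mu^{\alpha}=\gamma_{02}(\omega_{02}^{\alpha}+\sigma^{\alpha})$ and $\nu^{\alpha}=\gamma_{10}^{-1}\sigma^{\alpha}-\omega_{10}^{\alpha}$, evaluate $\nu\circ\mu$ using the two composition conditions to obtain $\gamma_{02}c^2(\gamma_{10}-\omega_{10}\circ\omega_{02}/c^2)$, and substitute into $\omega_{12}^{\alpha}=(c^2/\nu\circ\mu)\mu^{\alpha}-\nu^{\alpha}$ so that $\gamma_{02}$ cancels. The consistency check for $\sigma=\nu$ is a nice addition not present in the paper.
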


\begin{proof}
Using the relation (\ref{binary plus}) we calculate the Minkowski velocity $\mu^{\alpha}$ of the third material point (No. 2) adding to the relative binary velocity $\omega_{02}^{\mu}$ Minkowski velocity $\sigma^{\alpha}$: 
\begin{equation}
    \mu^{\alpha}=(
    \omega_{02}\ \langle + | \ \sigma)^{\alpha}=\frac{\omega_{02}^{\alpha}+\sigma^{\alpha}}{\sqrt{1+\omega_{02}\circ\omega_{02}/c^2}}=\gamma_{02}(\omega_{02}^{\alpha}+\sigma^{\alpha}).
\end{equation}
From the transformation of the analogous relation for the first (No. 1) and the second material point (No. 0), it is possible to calculate the Minkowski velocity $\nu^{\alpha}$ of the first material point:
\begin{equation}
    \nu^{\alpha}=\frac{1}{\gamma_{10}}\sigma^{\alpha}-
    \omega_{10}^{\alpha}=:(\sigma\  |-\rangle \
    \omega_{10})^{\alpha},
\end{equation}
where, by the way, the subtraction operation of the inverse (in relation to $\sigma^{\alpha}$) binary velocity $\omega_{10}(\nu,\sigma)=\omega^{-1}_{01}(\sigma,\nu)$ was defined.
The scalar product of the first (No. 1) and last Minkowski (No. 2) velocities is: 
\begin{equation}
    \nu\circ\mu=\gamma_{02}(\gamma_{10}^{-1}(c^2+\sigma\circ\omega_{02})-\sigma\circ\omega_{10}-\omega_{10}\circ\omega_{02})=\gamma_{02}(\gamma_{10}c^2-\omega_{10}\circ\omega_{02}),
\end{equation}
where the assumptions (conditions of composition) were used. 
Thus, the composition of the velocities, which is the binary velocity $\omega_{12}^{\alpha}$ takes the form:
\begin{equation}
    (\omega_{10}\boxplus_{B}^{\sigma}\omega_{02})^{\alpha}=\omega_{12}^{\alpha}=(\mu\dsub\nu)^{\alpha}=\frac{\sigma^{\alpha}+\omega_{02}^{\alpha}}{\gamma_{10}-\omega_{10}\circ\omega_{02}/c^2}-\frac{\sigma^{\alpha}}{\gamma_{10}}+\omega_{10}^{\alpha},
\end{equation}
which, with the accuracy of the order of writing, constitutes the thesis of the theorem. \qed
\end{proof}
To check compliance with Oziewicz's original formula, we can use the useful equation $\omega_{10}\circ\omega_{02}= -\gamma_{10}\omega^{-1}_{10}\circ\omega_{02}$ -- for which the result value must be additionally changed (remove the minus) due to other signatures $(+---)$ vs $(-+++)$.

Due to the fact that the composition of binary velocities depends on the middle material point, and not the extreme point or inverse binary velocity, it is possible to refine the associativity of velocities composition in the sense of Oziewicz, based on transitivity:
\begin{equation}
    \Big(\omega_{10}(\nu,\sigma)\boxplus_{B}^{\sigma}\omega_{02}(\sigma,\mu)\Big)\boxplus_B^{\mu}\omega_{23}(\mu,\beta)=\omega_{10}(\nu,\sigma)\boxplus_{B}^{\sigma}\Big(\omega_{02}(\sigma,\mu)\boxplus_B^{\mu}\omega_{23}(\mu,\beta)\Big),
\end{equation}
where both sides are by definition equal  $\omega_{13}(\nu,\beta)=\{\omega^{\alpha}_{13}(\nu,\beta)\}$. The limited sense of such associativity consists in numerous constraints (conditions) for composition of  binary velocities and in the ternary nature of such composition (the dependence on the third parameter $\sigma^{\alpha}$ or $\mu^{\alpha}$).

\section{The ternary Oziewicz--Ungar--Dragan relative velocity and special case of its addition in the 4D Oziewicz--Einstein sense}

Already composition of binary velocities shows ternarity and dependence on the distinguished frame of reference. Additionally, the binary composition required uncomfortable orthogonality conditions:  the velocity $\omega_{02}^{\alpha}$ is othogonal to $\sigma^{\alpha}$, but velocities $\omega_{10}^{\alpha}, \omega_{12}^{\alpha}$ are ortgononal to $\nu^{\alpha}$.
It is therefore worth considering a velocities subset that will contain orthogonal velocities up to the one velocity only (e.g. to $\sigma^{\alpha}$ -- see Fig. \ref{ternaryfig}). Simply put, Oziewicz followed this path, introducing a relative ternary velocity \cite{Oziewicz 1,Oziewicz 3}. Above all, however, he referred to the Lorentz transformation (in the active sense iii., not resting sense iv.).

In this work, the ternary velocity will be introduced directly at the level of the covariant Lorentz transformation described in the next section. It should be known that the Lorentz boost depends on the bivector, which can be described by two Minkowski velocities (reference velocity $\sigma^{\alpha}$ and boost velocity $\zeta^{\alpha}$) or equivalently  Minkowski velocity ($\sigma^{\alpha}$) and Oziewicz velocity ($\omega^{\alpha}$ or $\xi^{\alpha}$) orthogonal to the first ($\omega\circ\sigma=0$ or $\xi\circ\sigma=0$):
\begin{equation}
    (\sigma \wedge \zeta)^{\alpha\beta}:=\sigma^{\alpha}\zeta^{\beta}-\sigma^{\beta}\zeta^{\alpha}=\sigma^{\alpha}\xi^{\beta}-\sigma^{\beta}\xi^{\alpha}=:(\sigma \wedge \xi)^{\alpha\beta}.
\end{equation}
A more natural parameterization seems to be Oziewicz space-like velocity (ternary $\xi^{\alpha}$ or binary $\omega^{\alpha}$), as it is normalized (with the signature minus) to the usual boost velocity value.

\begin{figure}[h!]
\centering
%\vspace{2.5cm}
\includegraphics[width=8cm]{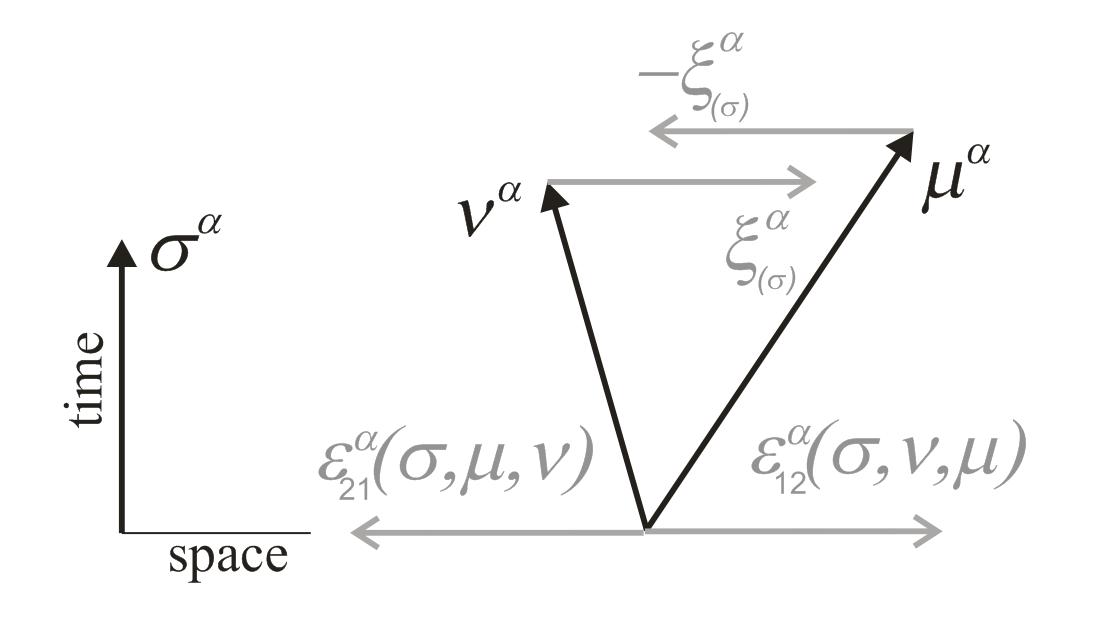}
\caption{Visualization of the Oziewicz--Ungar--Dragan ternary relative velocity $\xi^{\alpha}_{(\sigma)}$ and the Einstein--Oziewicz relative velocity $\varepsilon^{\mu}_{12}(\sigma,\nu,\mu)$, along with the inverse velocities   $-\xi^{\alpha}_{(\sigma)}$ and $\varepsilon^{\mu}_{21}(\sigma,\mu,\nu)$, on the space-time Minkowski diagram . Apart from the Minkowski absolute velocities $\nu^{\alpha}$ and $\mu^{\alpha}$ of two material points, the above relative velocities depend on the Minkowski velocity $\sigma^{\alpha}$ of the observer, what illustrates the relativistic property of ternarity. The Oziewicz--Ungar--Dragan ternary velocity $\xi^{\alpha}_{(\sigma)}$ is a space-like parameter (inverse velocity in another sense) of the active Lorentz boost transforming of $\nu^{\alpha}$ to $\mu^{\alpha}$. Whereas, the Einstein--Oziewicz velocity $\varepsilon^{\mu}_{12}$ is a space-like version of the velocity transformation $\mu^{\alpha}$ of passive Lorentz  boost to the system with the velocity  $\nu^{\alpha}$.}
\label{ternaryfig}
\end{figure}

Nevertheless, we will begin by defining a Minkowski type velocity $\zeta^{\alpha}$ that parameterizes the following active Lorentz boost (or equivalent passive inverse boost):
\begin{equation}
\label{zeta}
    La(\sigma,\zeta)\big[\{\nu^{\alpha}\}\big]:=\{\mu^{\alpha}\}\ \longleftrightarrow \ Lp(\sigma,\zeta)\big[\{\mu^{\alpha}\}\big]:=\{\nu^{\alpha}\}.
\end{equation}
Thanks to the Theorem \ref{Theorem inverse velocity} proved in the next section (see (\ref{inverse})), it is possible to provide a solution to the above equation taking into account the notations used here: 
\begin{equation}
  \zeta^{\alpha}=\frac{\sigma\circ\nu+\sigma\circ\mu}{c^2-\nu\circ\mu+2(\sigma\circ\nu)(\sigma\circ\mu)/c^2}(\mu^{\alpha}-\nu^{\alpha})-\frac{c^2-\nu\circ\mu-2(\sigma\circ\nu)^2/c^2}{c^2-\nu\circ\mu+2(\sigma\circ\nu)(\sigma\circ\mu)/c^2}\sigma^{\alpha}.
\end{equation}
Based on the scalar product:
\begin{equation}
  \zeta\circ\sigma=\frac{(\sigma\circ\nu)^2+(\sigma\circ\mu)^2+c^2\nu\circ\mu-c^4}{c^2-\nu\circ\mu+2(\sigma\circ\nu)(\sigma\circ\mu)/c^2},
\end{equation}
it is already possible to determine the canonical ternary Oziewicz velocity: $\xi^{\alpha}_{(\sigma)}=(\zeta\dsub\sigma)^{\alpha}$ as relative velocity (in binary sense) $\zeta^{\alpha}(\sigma,\nu,\mu)$ with respect to $\sigma^{\alpha}$:
\begin{equation}
  \xi^{\alpha}_{(\sigma)}=\frac{c^2\sigma\circ\nu+c^2\sigma\circ\mu}{(\sigma\circ\nu)^2+(\sigma\circ\mu)^2+c^2\nu\circ\mu-c^4}(\mu^{\alpha}-\nu^{\alpha})-\frac{(\sigma\circ\mu)^2-(\sigma\circ\nu)^2}{(\sigma\circ\nu)^2+(\sigma\circ\mu)^2+c^2\nu\circ\mu-c^4}\sigma^{\alpha}.
\end{equation}
This could be written a little shorter  (see \cite{APP A,RinP}):
\begin{equation}
  \xi^{\alpha}_{(\sigma)}(\sigma,\nu,\mu)=\frac{c^2\sigma\circ\nu+c^2\sigma\circ\mu}{(\sigma\circ\nu)^2+(\sigma\circ\mu)^2+c^2\nu\circ\mu-c^4}\Big(\mu^{\alpha}-\nu^{\alpha}-\frac{\sigma\circ\mu-\sigma\circ\nu}{c^2}\sigma^{\alpha}\Big).
\end{equation}
This velocity in this work is called Oziewicz--Ungar--Dragan ternary velocity (Fig. \ref{ternaryfig}), according to Oziewicz \cite{Oziewicz 1,Oziewicz 2,Oziewicz 3}, even though Ungar \cite{Ungar 2001,Ungar 2005} and Dragan \cite{Dragan,Dragan WS} only gave its 3D version (see simple interpretation by the laufer/bishop method  \cite{poster}), which is not explicitly ternary. Moreover, the Ungar approach was not about relative velocity (subtraction), but addition (“Einstein coaddition”).  In order to expose this ternarity, one can introduce the ternary subtraction of Minkowski velocities leading to the ternary velocity:
\begin{equation}
  (\mu\triangleminus_{\sigma}\nu)^{\alpha}:=\xi^{\alpha}_{(\sigma)}(\sigma,\nu,\mu).
\end{equation}
This does not contradict binary relationships between velocities $\zeta^{\alpha}$ and $\xi^{\alpha}_{(\sigma)}$:  
\begin{equation}
 \xi^{\alpha}_{(\sigma)}=(\zeta\dsub\sigma)^{\alpha}=\omega^{\alpha}(\sigma,\zeta)\longleftrightarrow \zeta^{\alpha}=(\xi_{(\sigma)}\  \langle + | \
    \sigma)^{\alpha}.
\end{equation}
Thus, the velocity $\zeta^{\alpha}$ is the time-like equivalent of the space-like velocity $\xi^{\alpha}_{(\sigma)}$, so it can be called the Oziewicz--Minkowski velocity.
On the other hand, the more direct relations (without $\zeta^{\alpha}$) are ternary (see Fig. \ref{ternaryfig}):
\begin{equation}
 \xi^{\alpha}_{(\sigma)}=(\mu\triangleminus_{\sigma}\nu)^{\alpha}\longleftrightarrow \mu^{\alpha}=:(\xi_{(\sigma)}\triangleplus_{\sigma}
    \nu)^{\alpha}.
\end{equation}

Unfortunately, all of the above operations are not internal operations. As for binary velocities, also for ternary velocities there should be a certain rule of their addition. In further research this problem will be attacked in general, and now Oziewicz interpretation related to Einstein velocity composition will be given.The key here is the relation of the Lorentz transformation in the form (\ref{zeta}). It means that the “orthodox” composition of Minkowski velocities $\zeta^{\alpha}$ and $\nu^{\alpha}$ is Minkowski velocity $\mu^{\alpha}$. Now it is enough to convert the time-like Minkowski velocities into relative space-like velocities (binary and ternary Oziewicz velocities). So, if we define the addition of ternary velocities in the sense of Oziewicz as follows: 
\begin{equation}
\label{ternarne1}
    \big(\xi_{(\sigma)01}(\sigma,\sigma,\nu)\boxplus_{O}^{\sigma}\xi_{(\sigma)12}(\sigma,\nu,\mu)\big)^{\alpha}:=\xi^{\alpha}_{(\sigma)02}(\sigma,\sigma,\mu)=\omega_{02}^{\alpha}(\sigma,\mu),
\end{equation}
it will be, in a sense, Lorentz boost:
\begin{equation}
\label{ternarne2}
    \Big((\nu\dsub\sigma)\boxplus_{O}^{\sigma}\big(\zeta(\nu\rightarrow\mu)\dsub\sigma\big)\Big)^{\alpha}:=(\mu\dsub\sigma)^{\alpha}.
\end{equation}
The sense of such Lorentz boost is developed in the following theorem. 
It is worth noting that ternary addition in the sense of Oziewicz cannot be associative, because one of the added velocities (as well as the resulting velocity) is ordinary binary velocity, not ternary (it depends twice on $\sigma$). 

\begin{theorem}[Oziewicz \cite{Oziewicz 3}]
Ternary velocities composition in the sense of Oziewicz $\boxplus^{\sigma}_{O}$ (adding ternary velocity $\xi^{\alpha}\equiv\xi^{\alpha}_{(\sigma)}$) in the material point system $\sigma^{\alpha}$ is the 4D version of Einstein's 3D velocity addition ($\oplus_0$) and has an analogous algebraic form:
\begin{equation}
    (\omega\boxplus_{O}^{\sigma}\xi)^{\alpha}=\frac{\omega^{\alpha}+\xi^{\alpha}+\frac{\gamma_{\xi}}{\gamma_{\xi}+1}\big((\omega\wedge\xi)\odot\xi\big)^{\alpha}/c^2}{1-\omega\circ\xi/c^2}=\frac{\omega^{\alpha}+\xi^{\alpha}+\frac{\gamma_{\xi}}{\gamma_{\xi}+1}\Big(\frac{\xi\circ\xi}{c^2}\omega^{\alpha}-\frac{\omega\circ\xi}{c^2}\xi^{\alpha}\Big)}{1-\omega\circ\xi/c^2}.
\end{equation}
In order to refer to the 3D version, it is enough to convert tensor products (antisymmetric $\wedge$ or internal $\odot$) to 3D vector products and to change to opposite signs of scalar products (transition from signature $+---$ to signature $+++$):  
\begin{equation}
\label{Einstein}
    \vec{v}\oplus_{0}\vec{W}=\frac{\vec{v}+\vec{W}+\frac{\gamma_{W}}{\gamma_{W}+1}(\vec{v}\times\vec{W})\times\vec{W}/c^2}{1+\vec{v}\cdot\vec{W}/c^2}=\frac{\vec{v}+\vec{W}+\frac{\gamma_{W}}{\gamma_{W}+1}\Big(\frac{\vec{v}\cdot\vec{W}}{c^2}\vec{W}-\frac{W^2}{c^2}\vec{v}\Big)}{1+\vec{v}\cdot\vec{W}/c^2}.
\end{equation}
Thus, the addition of the ternary velocity in the sense of Oziewicz can rightly be called the addition of Oziewicz--Einstein.   
\end{theorem}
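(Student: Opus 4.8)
The plan is to verify the stated covariant identity in one convenient Lorentz frame and then invoke covariance: the four-vectors $\omega^{\alpha}$, $\xi^{\alpha}$, $\sigma^{\alpha}$ and the scalars $\omega\circ\xi$, $\xi\circ\xi$, $\gamma_{\xi}=1/\sqrt{1+\xi\circ\xi/c^{2}}$ all transform tensorially, so an identity that holds componentwise in one frame holds as a four-vector identity in all frames. I would work in the rest frame of the observer, $\sigma^{\alpha}=(c,\vec{0})$. Writing $\nu^{\alpha}=\gamma_{v}(c,\vec{v})$ and $\zeta^{\alpha}=\gamma_{W}(c,\vec{W})$, the binary formula (\ref{binary OB}) immediately gives the purely spatial reductions $\omega^{\alpha}=(\nu\dsub\sigma)^{\alpha}=(0,\vec{v})$ and $\xi^{\alpha}=(\zeta\dsub\sigma)^{\alpha}=(0,\vec{W})$, both orthogonal to $\sigma$. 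Hence $\gamma_{\xi}=\gamma_{W}$, $\omega\circ\xi=-\vec{v}\cdot\vec{W}$, $\xi\circ\xi=-W^{2}$, and the denominator becomes $1-\omega\circ\xi/c^{2}=1+\vec{v}\cdot\vec{W}/c^{2}$.

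Next I would identify the left-hand side from its definition (\ref{ternarne1})--(\ref{ternarne2}): the composition equals $\mu\dsub\sigma$ with $\mu=La(\sigma,\zeta)[\nu]$. In the rest frame of $\sigma$ the covariant boost $La(\sigma,\zeta)$ (see the next section and Theorem \ref{Theorem inverse velocity}) is the ordinary Lorentz boost of three-velocity $\vec{W}$, so it maps the particle of velocity $\vec{v}$ to the particle of velocity $\vec{u}=\vec{v}\oplus_{0}\vec{W}$ by the textbook relativistic composition of velocities. Reducing $\mu\dsub\sigma$ in this frame exactly as above then yields $(\omega\boxplus_{O}^{\sigma}\xi)^{\alpha}=(0,\vec{v}\oplus_{0}\vec{W})$.

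Finally I would evaluate the covariant right-hand side in the same frame. The two written forms agree by the definition of $\wedge$ and $\odot$, since $((\omega\wedge\xi)\odot\xi)^{\alpha}=(\xi\circ\xi)\,\omega^{\alpha}-(\omega\circ\xi)\,\xi^{\alpha}$. Substituting the reduced components, the time component of the numerator vanishes while its spatial part is exactly $\vec{v}+\vec{W}+\frac{\gamma_{W}}{\gamma_{W}+1}\big(\frac{\vec{v}\cdot\vec{W}}{c^{2}}\vec{W}-\frac{W^{2}}{c^{2}}\vec{v}\big)$; dividing by $1+\vec{v}\cdot\vec{W}/c^{2}$ reproduces the right member of (\ref{Einstein}), so the covariant expression also equals $(0,\vec{v}\oplus_{0}\vec{W})$. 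Both sides thus coincide in the rest frame of $\sigma$ and, by covariance, as four-vectors in every frame---which simultaneously proves the formula and its identification with Einstein's 3D addition $\oplus_{0}$.

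The step I expect to be the main obstacle is the identification in the second paragraph: showing that the covariant active boost $La(\sigma,\zeta)$ collapses, in $\sigma$'s rest frame, to the elementary boost of three-velocity $\vec{W}$ carried by $\xi$, and pinning down the active/passive convention so that the outcome is $\vec{v}\oplus_{0}\vec{W}$ rather than $\vec{W}\oplus_{0}\vec{v}$. Because $\oplus_{0}$ is non-commutative, fixing this order---and placing the $\gamma_{W}$ factor on the boost velocity $\vec{W}$ rather than on $\vec{v}$---is the only genuinely delicate point; the remaining manipulations are routine algebra.
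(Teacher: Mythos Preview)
Your argument is correct and complete, but it follows a genuinely different route from the paper's own proof. The paper stays manifestly covariant throughout: it inserts the covariant active boost formula (\ref{La}) for $\mu^{\alpha}=La(\sigma,\zeta)\nu$, writes all scalar products $\nu\circ\sigma$, $\zeta\circ\sigma$, $\nu\circ\zeta$, $\mu\circ\sigma$ in terms of $\gamma_{\omega}$, $\gamma_{\xi}$ and $\omega\circ\xi$, then substitutes $\nu=\omega\,\langle+|\,\sigma$ and $\zeta=\xi\,\langle+|\,\sigma$ into $(\mu\dsub\sigma)^{\alpha}$ and simplifies the resulting four-vector expression algebraically until it matches the stated right-hand side. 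No frame is ever chosen.

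Your strategy---evaluate both four-vectors in the rest frame of $\sigma$ and appeal to tensoriality---trades that covariant algebra for a single structural observation: in $\sigma$'s rest frame the covariant active boost $La(\sigma,\zeta)$ is, by construction, the ordinary Lorentz boost with three-velocity $\vec{W}$, so $(\mu\dsub\sigma)^{\alpha}=(0,\vec v\oplus_{0}\vec W)$ holds by the very definition of Einstein addition; the right-hand side then reduces to the same thing by direct substitution of $\omega=(0,\vec v)$, $\xi=(0,\vec W)$. This is shorter and makes the ``4D$\leftrightarrow$3D analogy'' in the theorem statement almost tautological. What it costs you is the obstacle you correctly flagged: one must know independently that the covariant $La$ of the next section really does reduce to the textbook boost in $\sigma$'s frame, and that the active convention yields $\vec v\oplus_{0}\vec W$ (with $\gamma_{W}$, not $\gamma_{v}$) rather than the reversed order. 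Since the paper derives $Lp$ from the 3D $\ominus_{0}$ formula and sets $La=Lp^{-1}=Lp(\sigma\leftrightarrow\nu)$, this reduction is available and the ordering comes out right; once that is granted, your proof is complete.
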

\begin{proof}
In the 3D Dragan approach, the relationship between the ternary velocity (as “inverse” velocity) with the Lorentz transformation and Einstein velocities composition is not in doubt and results almost from the very definition of ternary velocity (see \cite{APP A,poster}). Oziewicz 4D approach is more general and, apart from the definition of ternary velocity, must be based on the definition of its addition in the form  (\ref{ternarne1}) or (\ref{ternarne2}). This makes the examined thesis a non-obvious theorem interpreting Einstein velocities composition in terms of ternary velocity -- since it states that 4D formulas are fully analogous to 3D formulas. The latter statement requires a formal proof by calculations.

In order to use the definition (\ref{ternarne2}) we first apply the covariant Lorentz transformation in the active version (see (\ref{La})):
\begin{equation}
  \mu^{\alpha}=La^{\alpha}_{\beta}(\sigma,\zeta)\ \nu^{\beta}=\nu^{\alpha}-\frac{\nu\circ(\zeta+\sigma)}{\sigma\circ(\zeta+\sigma)}(\zeta^{\alpha}+\sigma^{\alpha})+2\frac{\nu\circ\sigma}{c^2}\zeta^{\alpha}.
\end{equation}
The needed scalar products can be expressed as follows:
\begin{equation}
    \nu\circ\sigma=\gamma_{\nu}c^2=\gamma_{\omega}c^2, \ \ \ \ \zeta\circ\sigma=\gamma_{\zeta}c^2=\gamma_{\xi}c^2,
\end{equation}
\begin{equation}
    \nu\circ\zeta=\gamma_{\omega}\gamma_{\xi}\big(c^2-\vec{v}\cdot\vec{W}\big)=\gamma_{\omega}\gamma_{\xi}(c^2+\omega\circ\xi),
\end{equation}
\begin{equation}
    \mu\circ\sigma=2\frac{\nu\circ\sigma}{c^2}\zeta\circ\sigma-\nu\circ\zeta=\gamma_{\omega}\gamma_{\xi}(c^2-\omega\circ\xi).
\end{equation}
Now it is enough to go from Minkowski velocities to space-like (binary or ternary) velocities:  $\mu \rightarrow \mu\dsub\sigma$, $\nu=\omega\langle +|\sigma$, $\zeta=\xi\langle +|\sigma$:
\begin{equation}
    (\mu\dsub\sigma)^{\alpha}=\frac{\omega^{\alpha}+\sigma^{\alpha}}{\gamma_{\xi}\big(1-\frac{\omega\circ\xi}{c^2}\big)}-\frac{\big(\gamma_{\xi}+1+\gamma_{\xi}\frac{\omega\circ\xi}{c^2}\big)(\gamma_{\xi}\xi^{\alpha}+\gamma_{\xi}\sigma^{\alpha}+\sigma^{\alpha})}{\gamma_{\xi}\big(1-\frac{\omega\circ\xi}{c^2}\big)(\gamma_{\xi}+1)}+2\frac{\xi^{\alpha}+\sigma^{\alpha}}{1-\frac{\omega\circ\xi}{c^2}}-\sigma^{\alpha},
\end{equation}
which, after algebraic transformations, leads to the thesis of theorem. 
\qed
\end{proof}

It is worth noting that Oziewicz-Einstein addition of the ternary velocity does not depend explicitly on $\sigma^{\alpha}$, but only implicitly in the form of v. type assumptions about orthogonality: $\omega\circ\sigma=\xi\circ\sigma=(\omega\boxplus^{\sigma}_{O}\xi)\circ\sigma=0$. It should not be thought that Oziewicz's theorem only indicates the change of the language of Minkowski's time-like velocities into space-like (ternary and binary) velocities. First, the ternary velocity itself is something significantly different from Einstein's composition of velocities. This is evidenced by the fulfillment of assumption iii. in opposition to the assumption iv. -- and fulfilling ii. in opposition to the assumption i.  Moreover, ternary relative velocity is antisymmetric (assumption viii.), and Einstein composition is not antisymmetric in subtracting, nor symmetrical in adding.  Secondly, adding ternary velocity in the sense of Oziewicz--Einstein is a special case of ternary composition -- the general case will be attacked in further research.  

The concept of relative ternary velocity was published in full general by Oziewicz in 2004 \cite{Oziewicz 1} and 2007 \cite{Oziewicz 2} (see also \cite{Oziewicz 3,Oziewicz 4,Oziewicz Page}). A 3D version of this velocity for addition was published in the Ungar handbook in 2001 \cite{Ungar 2001} (see also \cite{Ungar 2005}). Definitions of this velocity were published by Ungar as early as 1992, but at that time he did not provide the result of its calculation \cite{Ungar 2}.  However, Ungar did not recognize the special importance of ternary velocity (“Einstein coaddition”) and preferred other velocities: proper velocity \cite{Ungar 2006}, M\"obius velocity addition and M\"obius coaddition \cite{Ungar 2007}. The author has already quoted in \cite{RinP} the velocity given by Ungar \cite{Ungar 2007}, but in a differentially equivalent 3D ternary velocity version, because Ungar did not provide the full formula for this velocity in this paper. 

The Lorentz transformation in terms of initial and final vectors developed along a parallel path (Wyk 1986, 1991 \cite{Wyk,Wyk 2}, also Ungar 1992 \cite{Ungar 2}). This concept is similar to the covariant Lorentz transformation, discussed later. A slightly similar approach based on the Lorentz transformation of the 4D position vector was used by Urbantke in 2002 \cite{Urbantke}. The approaches of these three researchers (Wyk, Ungar, Urbantke), however, were less general than that of Oziewicz (based on \cite{Celakoska 1,Celakoska 2}). The initial and final vectors approach (Lorentz link boost problem) was described by Celakoska in 2008 \cite{Celakoska 1} and Celakoska and Chakmakov in 2011 \cite{Celakoska 2}. 
In 2015, however, Celakoska's team published quite clear formulae for the relative 4D and 3D ternary velocities \cite{Celakoska 4}. The Lorentz link boost approach was used, but the 3D result was similar to the Urbantke formula. Unfortunately, for the subtracted 3D velocities, a different normalization was used (like the Minkowski four-velocity), which for the resulting 3D ternary velocity -- it simplified the formula, but made it difficult to read it correctly. 

On the other hand, a strictly 3D approach was used in 2012 by Dragan in resuming his lectures \cite{Dragan}. The limitation to 3D description makes it impossible to fully recognize the ternary feature of relative velocity. Nevertheless, Dragan presented a slightly more general approach than Urbantke (“magic four-rule”) in the further chapter of his lectures \cite{Dragan,Dragan WS} (see also \cite{Oziewicz eter,Oziewicz Page}).

The use of Oziewicz--Ungar--Dragan ternary velocity allowed the author to define a new concept of ternary relativistic acceleration \cite{RinP,APP A,poster,poster 2}. This acceleration solved the problem of the non-parallelism of the force to acceleration in three-dimensional space in Special Relativity. This, in turn, made it possible to define the inertial mass unifying the transverse and longitudinal mass. The use of the same method of defining acceleration for Einstein velocity composition, instead of Oziewicz--Ungar--Dragan velocity, does not solve these problems of relativistic dynamics -- see the work of Celakoska and Chakmakov \cite{Celakoska 3}. Celakoska and Chakmakov did not use the relative ternary velocity (Oziewicz-Ungar-Dragan), probably because they did not recognize the fundamental role of this velocity, which they themselves described earlier \cite{Celakoska 4}. 

The discovery of the ternarity of relative velocity in a way reflects Oziewicz's anti-Popperian philosophy of science. Contrary to Popper's postulates, Oziewicz draws attention to the need for individual and subjective scientific research conducted by specialists \cite{Oziewicz Subjective}. The dependence of the relative velocity of two bodies on the observer is a specific example of this philosophy of anti-objectivism. A subjectively preferred frame of reference \cite{Choi} may be associated with some idea of Æther  which is occasionally considered \cite{Oziewicz eter,Szostek}. It should be emphasized, however, that ternary velocity is an orthodox concept in Special Relativity.  We find some resemblance to anti-Popperian philosophy of science in the Pythagoreans, who did not apply the principles of democracy to science. However, this should be understood as freedom of research, not as monopoly on it by any group \cite{Oziewicz Reviewers}. 

\subsection{3D formulae of relative ternary velocity}

While the article is essentially written in 4D language, some things may seem (and are) simpler in terms of 3D. In the case of ternary velocity, the simplification of the 3D formula may result from the apparent loss of ternary property, which is hidden in the selection of the 3D reference frame.  In any case, it's worth knowing the explicit 3D formulas of Ungar, Dragan, Urbantke and Celakoska's team in order to be able to directly compare them with Einstein's formula. 

Let the Cartesian frame fit to the Minkowski velocity of reference frame such that  $\sigma^{\alpha}=(c,\vec{0}$). In other words, such a Cartesian frame coincides with the reference frame. Then the relative ternary velocity of the body with the velocity of Minkowski $\mu^{\alpha}=(\gamma_uc,\gamma_u\vec{u})$ in relation to the body with the velocity of Minkowski $\nu^{\alpha}=(\gamma_vc,\gamma_v\vec{v})$ is reduced to practically three-dimensional velocity  $\xi^{\alpha}=(0,\vec{W})$. The 3D ternary velocity $\vec{W}$ can be expressed in three forms differing in the way of writing the normalization coefficient.  Here is the formula given by Ungar (“Einstein coaddition”) \cite{Ungar 2001,Ungar 2005}, changed to subtraction :
\begin{equation}
\label{W Ungar}
    \vec{W}=\vec{u}\boxminus \vec{v}:=\frac{\gamma_u+\gamma_v}{\gamma_u\gamma_v(1-\vec{u}\cdot\vec{v}/c^2)+\gamma_u^2+\gamma_v^2-1}(\gamma_u\vec{u}-\gamma_v\vec{v}),
\end{equation}
which is most similar to the 4D Oziewicz formula  \cite{APP A}. Dragan's original formula is slightly different \cite{Dragan,Dragan WS,RinP}:
\begin{equation}
\label{W Dragan}
    \vec{W}=\vec{u}\ominus_{\wedge} \vec{v}:=\frac{\gamma_u^{-1}+\gamma_v^{-1}}{1-\vec{u}\cdot\vec{v}/c^2+\gamma_u\gamma_v(1-u^2v^2/c^4)}(\gamma_u\vec{u}-\gamma_v\vec{v}).
\end{equation}
However, the simplest is the formula resulting from the work Urbantke \cite{Urbantke} written for relative velocity by Celakoska's team \cite{Celakoska 4} in the form: 
\begin{equation}
\label{W Urbantke}
    \vec{W}=\vec{u}\boxminus_{\wedge}\vec{v}:=\frac{2(\gamma_u+\gamma_v)}{(\gamma_u+\gamma_v)^2+(\gamma_u\vec{u}-\gamma_v\vec{v})^2/c^2}(\gamma_u\vec{u}-\gamma_v\vec{v}).
\end{equation}
The above equation will come out if we change two four-positions into two four-velocities in the Urbantke velocity formula (see further (\ref{44}))  -- but Urbantke did not do it in \cite{Urbantke}. Oziewicz referred to the original formula of Urbantke \cite{Oziewicz eter,Oziewicz Page}. Ungar also defined a formula of the Urbantke type, but did not write it out fully explicitly  \cite{Ungar 2001}. As already mentioned, a slightly more general formula than Urbantke was proposed by Dragan, which he called “magic four-rule”  \cite{Dragan,Dragan WS}. However, strictly speaking, this is not a strictly 4D rule, nor strictly 3D, but it describes the correspondence between 4D and 3D (and \textit{vice versa}). 

\begin{theorem}[About 3D ternary velocity formulae]
\label{Theorem W}
The formulae of Ungar (\ref{W Ungar}), Dragan (\ref{W Dragan}) and Urbantke-Celakoska (\ref{W Urbantke}) of the ternary velocity $\vec{W}$ are identically equal ($\boxminus\equiv\ominus_{\wedge}\equiv\boxminus_{\wedge}$). These formulae for ternary relative velocity $\vec{W}$ express the velocity of the active Loretnz boost “accelerating” the velocity $\vec{v}$ to velocity $\vec{u}$: $La_{\vec{W}}(\vec{v})=\vec{u}$, or express the velocity of the passive Lorentz boost “delaying” the velocity of $\vec{u}$ to the velocity of $\vec{v}$: $Lp_{\vec{W}}(\vec{u})=\vec{v}$. In other words, ternary velocity is a parameter of the ordinary passive Lorentz boost $Lp_{\vec{W}}[\vec{r}]_t=\vec{r}'$.
\end{theorem}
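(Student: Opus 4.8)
The plan is to split the statement into two independent tasks: first, the purely algebraic claim that the three prefactors in (\ref{W Ungar}), (\ref{W Dragan}) and (\ref{W Urbantke}) coincide; and second, the kinematic claim that the resulting $\vec{W}$ is the boost velocity. For the first task I would exploit that all three formulae share the common vector factor $\gamma_u\vec{u}-\gamma_v\vec{v}$, so it suffices to reduce the three scalar prefactors to a single normal form. For the second task I would avoid a direct 3D computation and instead reduce the already established 4D construction to the frame $\sigma^{\alpha}=(c,\vec{0})$, exactly as in Theorem~\ref{Theorem W}'s setup.

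First I would establish the common denominator. The workhorse identity is $\gamma_u^2u^2/c^2=\gamma_u^2-1$ (and its analogue for $v$), equivalently $\gamma_u^{-2}=1-u^2/c^2$. Expanding the Urbantke--Celakoska denominator $(\gamma_u+\gamma_v)^2+(\gamma_u\vec{u}-\gamma_v\vec{v})^2/c^2$ and substituting $\gamma_u^2u^2/c^2=\gamma_u^2-1$, $\gamma_v^2v^2/c^2=\gamma_v^2-1$ collapses it to
\begin{equation}
  2\big(\gamma_u\gamma_v(1-\vec{u}\cdot\vec{v}/c^2)+\gamma_u^2+\gamma_v^2-1\big),
\end{equation}
so that, together with its doubled numerator $2(\gamma_u+\gamma_v)$, it equals the Ungar prefactor. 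For Dragan I would use $1-u^2v^2/c^4=(\gamma_u^2+\gamma_v^2-1)/(\gamma_u^2\gamma_v^2)$, clear the factor $\gamma_u\gamma_v$ from the numerator $\gamma_u^{-1}+\gamma_v^{-1}$ and the denominator, and again recover the Ungar form. This settles $\boxminus\equiv\ominus_{\wedge}\equiv\boxminus_{\wedge}$.

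For the boost interpretation I would invoke the 4D definition $\xi^{\alpha}_{(\sigma)}=(\zeta\dsub\sigma)^{\alpha}$, where $\zeta^{\alpha}$ is the Minkowski velocity parametrizing the active Lorentz boost $La(\sigma,\zeta)$ carrying $\nu^{\alpha}$ to $\mu^{\alpha}$ in (\ref{zeta}). Specializing to $\sigma^{\alpha}=(c,\vec{0})$ gives $\sigma\circ\nu=\gamma_v c^2$, $\sigma\circ\mu=\gamma_u c^2$ and $\nu\circ\mu=\gamma_u\gamma_v(c^2-\vec{u}\cdot\vec{v})$; inserting these into the 4D ternary-velocity formula for $\xi^{\alpha}_{(\sigma)}$, the time component cancels and the spatial part reproduces exactly the Ungar expression (\ref{W Ungar}), so $\xi^{\alpha}=(0,\vec{W})$. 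Since $\zeta^{\alpha}$ is by construction the boost parameter taking $\nu^{\alpha}\to\mu^{\alpha}$ and $\xi^{\alpha}=(\zeta\dsub\sigma)^{\alpha}$ has spatial part equal to the ordinary 3D boost velocity, the relations $La_{\vec{W}}(\vec{v})=\vec{u}$ and its inverse $Lp_{\vec{W}}(\vec{u})=\vec{v}$ are just the 3D reductions of (\ref{zeta}).

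I expect the algebra of the first part to be routine once the $\gamma$-identities are in hand. The main obstacle lies in the second part: one must be sure that the 4D ternary velocity, defined only to be orthogonal to $\sigma^{\alpha}$, really reduces to a purely spatial vector whose direction and magnitude match the ordinary boost velocity, with no residual Thomas--Wigner contribution contaminating $\vec{W}$. The covariant construction guarantees this precisely because $\xi^{\alpha}$ is built as the binary subtraction $\dsub$ of the genuine boost parameter $\zeta^{\alpha}$ against $\sigma^{\alpha}$; a fully self-contained 3D argument, by contrast, would have to solve $La_{\vec{W}}(\vec{v})=\vec{u}$ through the non-collinear relativistic velocity-addition formula and carefully track the gyration terms --- exactly the complication the 4D route circumvents.
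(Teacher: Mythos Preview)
Your first part (algebraic equivalence of the three prefactors) is essentially identical to the paper's argument: both reduce the Urbantke--Celakoska and Dragan denominators to the Ungar normal form via $\gamma_u^2u^2/c^2=\gamma_u^2-1$ and $1-u^2v^2/c^4=(\gamma_u^2+\gamma_v^2-1)/(\gamma_u^2\gamma_v^2)$.

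Your second part takes a genuinely different route. You specialize the 4D ternary formula for $\xi^{\alpha}_{(\sigma)}$ to the frame $\sigma^{\alpha}=(c,\vec{0})$ and read off $\vec{W}$, relying on the fact that $\zeta^{\alpha}$ is by construction the boost parameter of the covariant $La(\sigma,\zeta)$, which in that frame coincides with the ordinary 3D boost. This is correct and short, but it leans on the forward-referenced Theorem~\ref{Theorem inverse velocity} and on the Section~3 identification of the covariant boost with the standard one; in effect you import the result from the 4D machinery rather than establishing it independently. The paper instead gives a fully self-contained 3D derivation: it writes the passive boost on spacetime coordinates $(t,\vec{r})\mapsto(t',\vec{r}')$, eliminates the dot product $\vec{r}\cdot\vec{W}$ in favour of $t,t'$, and solves algebraically for $\vec{W}$ and $\gamma_W$, obtaining the Urbantke expression $\vec{W}=\frac{2}{1+((\vec{r}-\vec{r}')/(t+t'))^2/c^2}\,\frac{\vec{r}-\vec{r}'}{t+t'}$; the substitution $(ct,\vec{r})\to(\gamma_uc,\gamma_u\vec{u})$, $(ct',\vec{r}')\to(\gamma_vc,\gamma_v\vec{v})$ then yields (\ref{W Urbantke}) directly. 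So the ``complication'' you anticipate in a 3D argument --- solving the non-collinear velocity-addition law and tracking gyration --- is not what the paper does; working at the level of the coordinate boost rather than the velocity-composition law sidesteps exactly that, and the paper later remarks that this 3D derivation is in fact the simplest available. Your 4D route buys brevity at the cost of logical dependence on later material; the paper's route buys independence and elementarity at the cost of a few more lines of algebra.
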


\begin{proof}
We'll start by transforming the denominator of formula (\ref{W Urbantke}):
\begin{equation}
\begin{split}
    (\gamma_u+\gamma_v)^2+\frac{(\gamma_u\vec{u}-\gamma_v\vec{v})^2}{c^2}
    =\gamma_u^2+\gamma_v^2+2\gamma_u\gamma_v+\gamma_u^2\frac{u^2}{c^2}+
    \gamma_v^2\frac{v^2}{c^2}-2\gamma_u\gamma_v\frac{\vec{u}\cdot\vec{v}}{c^2}\\
    =\gamma_u^2+\gamma_v^2+2\gamma_u\gamma_v+\gamma_u^2(1-\gamma_u^{-2})+
    \gamma_v^2(1-\gamma_v^{-2})-2\gamma_u\gamma_v\frac{\vec{u}\cdot\vec{v}}{c^2}\\
    =2\gamma_u^2+2\gamma_v^2-2 +2\gamma_u\gamma_v\Big(1-\frac{\vec{u}\cdot\vec{v}}{c^2}\Big),
\end{split}
\end{equation}
which is equal to double the denominator of the formula (\ref{W Ungar}). Since the numerators also differ twice, the formulas (\ref{W Ungar}) and (\ref{W Urbantke}) are indentity equal.   

Now let's transform the last term of the formula denominator of (\ref{W Dragan}):
\begin{equation}
    \gamma_u\gamma_v(1-u^2v^2/c^4)
    =\gamma_u\gamma_v-\gamma_u\gamma_v(1-\gamma_u^{-2})(1-\gamma_v^{-2})
    =\gamma_u\gamma_v^{-1}+\gamma_u^{-1}\gamma_v-\gamma_u^{-1}\gamma_v^{-1}.
\end{equation}
If we take this into account and extend the numerator and denominator of the formula (\ref{W Dragan}) with the factor $\gamma_u\gamma_v$, we get the formula (\ref{W Ungar}).  So the formulae (\ref{W Ungar}),(\ref{W Dragan}),(\ref{W Urbantke}) are indentity equal.

Now consider the passive Lorenzt transformation, written for the direction parallel and perpendicular to the direction of velocity $\vec{W}$:
\begin{equation}
    \vec{r}'_{\parallel}=\gamma_W\big(\vec{r}_{\parallel}-\vec{W}t\big) \:\:\:,\:\:\:
    \vec{r}'_{\perp}=\vec{r}_{\perp}
    \:\:\:,\:\:\:
    t'=\gamma_W\big(t-\vec{W}\cdot\vec{r}/c^2\big).
\end{equation}
The spatial part of the transformation can be expressed by one equation: 
\begin{equation}
    \vec{r}'=\vec{r}-\gamma_W\vec{W}t+\frac{\gamma_W^2}{\gamma_W^2+1}\frac{\vec{r}\cdot\vec{W}}{c^2}\vec{W}.
\end{equation}
If the dot product in this equation is expressed in terms of times, then from the equation: 
\begin{equation}
    \vec{r}'=\vec{r}-\gamma_W\vec{W}t+\frac{\gamma_W^2}{\gamma_W^2+1}\Big(t-\frac{t'}{\gamma_W}\Big)\vec{W},
\end{equation}
the velocity can be easily calculated: 
\begin{equation}
    \vec{W}=(1+\gamma_W^{-1})\frac{\vec{r}-\vec{r}'}{t+t'}.
\end{equation}
We still have to compute $\gamma_W$. For this purpose, we use the relationship  $W^2/c^2=1-\gamma_W^{-2}$:
\begin{equation}
   (1+\gamma_W^{-1})^2\Big(\frac{\vec{r}-\vec{r}'}{t+t'}\Big)^2/c^2=1-\gamma_W^{-2}.
\end{equation}
The above equation after algebraic simplification becomes a linear equation for $\gamma_W$, the solution of which is: 
\begin{equation}
\gamma_W=\frac{1+\big(\frac{\vec{r}-\vec{r}'}{t+t'}\big)^2/c^2}{1-\big(\frac{\vec{r}-\vec{r}'}{t+t'}\big)^2/c^2}
\:\:\:\longleftrightarrow\:\:\:
1+\gamma_W^{-1}=\frac{2}{1+\big(\frac{\vec{r}-\vec{r}'}{t+t'}\big)^2/c^2}.
\end{equation}
Using this result, we obtain the Urbantke formula  \cite{Urbantke}:
\begin{equation}
\label{44}
    \vec{W}=\frac{2}{1+\big(\frac{\vec{r}-\vec{r}'}{t+t'}\big)^2/c^2}\: \frac{\vec{r}-\vec{r}'}{t+t'}.
\end{equation}
This formula expresses the Lorentz transformation velocity -- the most fundamental relative velocity we can imagine -- Einstein's transformed relative velocity is a velocity secondary to that (physically and mathematically, but not historically). 

In the equation (\ref{44}) the components of four-position vectors can be replaced by components of any four-vector.  So if we change the four-position $(ct,\vec{r})$ to the four-velocity $(\gamma_uc,\gamma_u\vec{u})$, and the four-position $(ct',\vec{r}')$ to the four-velocity $(\gamma_vc,\gamma_v\vec{v})$ we get the formula for the ternary velocity (\ref{W Urbantke}). Thus, in fact, this formula describes a passive Lorenzt transformation  $Lp_{\vec{W}}(\vec{u})=\vec{v}$. And the active transformation obviously works in the opposite direction $La_{\vec{W}}(\vec{v})=\vec{u}$ which was the last thing to prove. \qed
\end{proof}

\section{Explicitly covariant rewriting of Einstein composition of 4D Minkowski velocities in the ternary sense}

The most typical Lorentz transformation is the so-called Lorentz boost. This boost in the 3D formalism is parameterized with three-dimensional velocity. In addition, the 3D velocity itself is subject to other transformation laws resulting from the Lorentz transformation (Einstein velocities composition).  The Lorentz boost formula for coordinates and the 3D velocities composition formula are, however, significantly different despite their formal closeness. The consequence of this difference is the rotation of Tomas--Wigner accompanying the composition of non-parallel 3D velocities, as well as the lack of associativity for the composition velocities operation -- although Lorentz boosts composition is associative by definition of the Lorentz group 
(however, it no longer has to be considered a pure boost).

If the 3D velocities composition formula is written in 4D form, the above duality (Lorentz boost of coordinates vs velocities composition) disappears in a way. This is how the covariant Lorentz transformation (for the Lorentz boost case) will be derived in this work. Of course, the covariant Lorentz transformation can be introduced differently, but the author succeeded precisely on the basis of the 4D generalization of Einstein velocities composition. Only then was it possible to find an analogous formula for the covariant Lorentz transformation in literature. Unfortunately, it is not widely used in the literature and, apart from the source \cite{Tsamparlis} and pioneering works from about half a century ago \cite{Bazanski,Krause 1,Krause 2,Fahnline}, it is difficult to indicate many works containing the covariant Lorentz transformation -- let alone its usage.   A distractor that hinders the recognition of the covariant Lorentz transformation is its ternary character (dependence on the Minkowski velocity of the distinguished frame of reference $\sigma^{\alpha}$). It was this distractor that disrupted the author in finding the covariant Lorentz transformation while writing his master thesis \cite{mgr}. However, the thesis \cite{mgr} contained an almost covariant form of the Lorentz transformation, and only the introduction of the $\sigma^{\alpha}$ vector was missing  (e.g. $\sigma\circ U/c$ in place of $U_0$). 

We'll start by writing the full 3D Einstein  velocities composition formula. Both in the context of relative velocity and in the context of the passive Lorentz transformation, such a law is naturally a subtraction rather than an addition of velocity (contrary to Einstein's original \cite{Einstein}):
\begin{equation}
\label{Einstein minus}
    \vec{u}\ominus_{0}\vec{v}:=\vec{u}'=\frac{\vec{u}-\vec{v}+\frac{\gamma_{v}}{\gamma_{v}+1}(\vec{u}\times\vec{v})\times\vec{v}/c^2}{1-\vec{u}\cdot\vec{v}/c^2}=:\vec{\mathcal{E}}(\vec{v},\vec{u}),
\end{equation}
where $\vec{\mathcal{E}}$ is the relative velocity of a body moving at velocity $\vec{u}$ relative to a body moving at velocity $\vec{v}$.
 Now we have to rewrite the above law using space-like 4D (binary) relative velocities.  Let $\vec{v}\rightarrow \nu^{\mu} \rightarrow \omega^{\alpha}_{01}=(\nu\dsub\sigma)^{\alpha}$ and $\vec{u}\rightarrow\mu^{\alpha}\rightarrow\omega^{\alpha}_{02}=(\mu\dsub\sigma)^{\alpha}$. Based on the identity $(\vec{u}\times\vec{v})\times\vec{v}=(\vec{u}\cdot\vec{v})\vec{v}-v^2\vec{u}$ the formula (\ref{Einstein minus}) can be written as follows:
\begin{equation}
\label{Einstein 4D}
    (\omega_{02}\boxminus_{0}^{\sigma}\omega_{01})^{\alpha}:=\frac{\omega^{\alpha}_{02}-\omega^{\alpha}_{01}+\frac{\gamma_{01}}{\gamma_{01}+1}\Big(\frac{\omega_{01}\circ\omega_{01}}{c^2}\omega^{\alpha}_{02}-\frac{\omega_{02}\circ\omega_{01}}{c^2}\omega^{\alpha}_{01}\Big)}{1+\omega_{02}\circ\omega_{01}/c^2}=:\varepsilon^{\alpha}(\omega_{01},\omega_{02}),
\end{equation}
where $\boxminus^{\sigma}_0$ is a 4D generalization from the definition of a 3D operation $\ominus_0$. Due to the signature selection, the signs before the 4D scalar products have been changed compared to the 3D scalar products ($\vec{u}\cdot\vec{v}\rightarrow-\omega_{02}\circ\omega_{01}$).Thus, Einstein velocities composition is treated here as some composition of binary velocities -- but other than transitive Oziewicz composition of binary velocities.  We already know from the previous section that the Lorentz boost velocity $\omega_{01}^{\alpha}$ can be interpreted up to the sign as ternary velocity. It can be concluded that the inverse implication to Oziewicz's theorem is derived now. Thus, the form of $\boxminus_0^{\sigma}$ (and consequently $\boxplus_0^{\sigma}:=-\boxminus_0^{\sigma}$) is a definition, and Oziewicz's theorem was the thesis that $\boxplus_O^{\sigma}=\boxplus_0^{\sigma}$. In other words, the Oziewicz operation  $\boxplus_{O}^ {\sigma}$ (Oziewicz--Eisntein operation) was defined differently than the usual 4D generalization   $\boxplus_0^{\sigma}$ of 3D Einstein operation $\oplus_0:=-\ominus_0$. In any case, using the action $\boxminus_o^{\sigma}$ we obtained the space-like velocity $\varepsilon^{\alpha}$, which will be called the relative velocity of Einstein--Oziewicz.  The Einstein--Oziewicz velocity $\varepsilon^{\alpha}$ depends explicitly only on two space-like composition (subtracted) velocities, which are orthogonal to the Minkowski velocity $\sigma^{\alpha}$ of reference frame. 

Now we will express the  Einstein--Oziewicz relative velocity $\varepsilon^{\alpha}$ using the Minkowski velocities $\nu^{\alpha}, \mu^{\alpha}$, as well as $\sigma^{\alpha} $, which will directly reflect the ternary character of the Einstein--Oziewicz relative velocity (see Fig. \ref{ternaryfig}). 

\begin{theorem}
The space-like Einstein-Oziewicz relative velocity  $\varepsilon^{\alpha}$ of the material point $\mu^{\alpha}$ in relation to the material point $\nu^{\alpha}$ in the reference frame of velocity $\sigma^{\alpha}$ is expressed as follows in terms of the three above-mentioned Minkowski velocities: 
\begin{equation}
\label{teza 3}
    \varepsilon^{\alpha}(\sigma,\nu,\mu)=\frac{c^2}{\mu\circ\nu}\Big[\mu^{\alpha}-\frac{\mu\circ(\nu+\sigma)}{\sigma\circ(\nu+\sigma)}(\nu^{\alpha}+\sigma^{\alpha})\Big]+\sigma^{\alpha}=:(\mu [-]_0^{\sigma} \nu)^{\alpha},
\end{equation}
which also specifies a new type (ternary or pseudo-binary) of Minkowski velocities subtraction $[-]_0^{\sigma}$ with a space-like value.
\end{theorem}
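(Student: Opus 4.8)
The plan is to start from the definition (\ref{Einstein 4D}) of the Einstein--Oziewicz velocity as the 4D Einstein subtraction $\boxminus_0^{\sigma}$ of the two canonical binary velocities $\omega_{01}^{\alpha}=(\nu\dsub\sigma)^{\alpha}$ and $\omega_{02}^{\alpha}=(\mu\dsub\sigma)^{\alpha}$, and to eliminate these binary velocities in favour of the three Minkowski velocities $\sigma^{\alpha},\nu^{\alpha},\mu^{\alpha}$ by means of the canonical formula (\ref{binary OB}). Explicitly, $\omega_{01}^{\alpha}=\frac{c^2}{\sigma\circ\nu}\nu^{\alpha}-\sigma^{\alpha}$ and $\omega_{02}^{\alpha}=\frac{c^2}{\sigma\circ\mu}\mu^{\alpha}-\sigma^{\alpha}$, both of which are automatically orthogonal to $\sigma^{\alpha}$.

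First I would compute the three scalar quantities entering (\ref{Einstein 4D}), using only the Minkowski normalization $\nu\circ\nu=\mu\circ\mu=\sigma\circ\sigma=c^2$ (whence $\omega_{01}\circ\sigma=\omega_{02}\circ\sigma=0$). Direct expansion gives $\frac{\omega_{01}\circ\omega_{01}}{c^2}=\frac{c^4}{(\sigma\circ\nu)^2}-1$ and $\frac{\omega_{02}\circ\omega_{01}}{c^2}=\frac{c^2(\mu\circ\nu)}{(\sigma\circ\mu)(\sigma\circ\nu)}-1$, while the Lorentz--Oziewicz factor is $\gamma_{01}=\frac{\sigma\circ\nu}{c^2}$. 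In particular the denominator of (\ref{Einstein 4D}) already simplifies to $1+\frac{\omega_{02}\circ\omega_{01}}{c^2}=\frac{c^2(\mu\circ\nu)}{(\sigma\circ\mu)(\sigma\circ\nu)}$.

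The decisive step is to evaluate the two factors multiplying $\omega_{02}^{\alpha}$ and $\omega_{01}^{\alpha}$ in the numerator. Writing $k=\gamma_{01}/(\gamma_{01}+1)$, the key cancellation is $1+k\,\frac{\omega_{01}\circ\omega_{01}}{c^2}=\frac{c^2}{\sigma\circ\nu}$, which follows from the factorization $c^4-(\sigma\circ\nu)^2=(c^2-\sigma\circ\nu)(c^2+\sigma\circ\nu)$ cancelling the $\gamma_{01}+1=(c^2+\sigma\circ\nu)/c^2$ hidden in $k$; this is what makes the $\gamma_{01}$-dependence vanish and leaves a rational expression. The analogous factor multiplying $\omega_{01}^{\alpha}$ collapses to $1+k\,\frac{\omega_{02}\circ\omega_{01}}{c^2}=\frac{c^2\,\mu\circ(\nu+\sigma)}{(\sigma\circ\mu)\,\sigma\circ(\nu+\sigma)}$, where I have already used $\mu\circ\sigma+\mu\circ\nu=\mu\circ(\nu+\sigma)$ and $\sigma\circ\nu+c^2=\sigma\circ(\nu+\sigma)$.

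It remains to substitute the explicit $\omega_{01}^{\alpha},\omega_{02}^{\alpha}$ back, divide by the denominator above, and collect the coefficients of $\mu^{\alpha}$, $\nu^{\alpha}$ and $\sigma^{\alpha}$ separately. The $\mu^{\alpha}$ coefficient reduces to $c^2/(\mu\circ\nu)$ and the $\nu^{\alpha}$ coefficient to $-\frac{c^2}{\mu\circ\nu}\,\frac{\mu\circ(\nu+\sigma)}{\sigma\circ(\nu+\sigma)}$, matching (\ref{teza 3}) termwise. The main obstacle is the bookkeeping of the $\sigma^{\alpha}$ contributions: $\sigma^{\alpha}$ enters through both binary velocities and once more after division, so three separate pieces must be gathered over the common denominator $(\mu\circ\nu)\,\sigma\circ(\nu+\sigma)$ and shown to sum to exactly $\frac{(\mu\circ\nu)(\sigma\circ\nu)-c^2(\sigma\circ\mu)}{(\mu\circ\nu)\,\sigma\circ(\nu+\sigma)}$, which is precisely the $\sigma^{\alpha}$-coefficient of the right-hand side of (\ref{teza 3}) once its bare $\sigma^{\alpha}$ term is combined with the $-\frac{c^2}{\mu\circ\nu}\frac{\mu\circ(\nu+\sigma)}{\sigma\circ(\nu+\sigma)}\sigma^{\alpha}$ piece. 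No step is conceptually deep; the only genuine insight is the factorization that renders the $\gamma_{01}$-dependent coefficients $\gamma$-free, so the real risk is an arithmetic slip in the $\sigma^{\alpha}$ accounting rather than any structural difficulty.
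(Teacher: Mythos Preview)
Your proposal is correct and follows essentially the same route as the paper's own proof: substitute $\omega_{01}^{\alpha}=(\nu\dsub\sigma)^{\alpha}$ and $\omega_{02}^{\alpha}=(\mu\dsub\sigma)^{\alpha}$ into (\ref{Einstein 4D}), compute the same two scalar products, and simplify. The only cosmetic difference is that the paper introduces the abbreviations $x=\nu\circ\sigma/c^2$, $y=\mu\circ\sigma/c^2$, $z=\mu\circ\nu/c^2$ and grinds out the algebra, whereas you isolate the factorization $c^4-(\sigma\circ\nu)^2=(c^2-\sigma\circ\nu)(c^2+\sigma\circ\nu)$ that kills the $\gamma_{01}$-dependence --- a slightly cleaner bookkeeping device for the same computation.
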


\begin{proof}
Binary velocities appearing in the formula (\ref{Einstein 4D}) have to be written into the Minkowski velocities  $\omega^{\alpha}_{01}=(\nu\dsub\sigma)^{\alpha}$ and $\omega^{\alpha}_{02}=(\mu\dsub\sigma)^{\alpha}$:
\begin{equation}
    \varepsilon^{\alpha}=\frac{\frac{c^2}{\mu\circ\sigma}\mu^{\alpha}-\frac{c^2}{\nu\circ\sigma}\nu^{\alpha}+\frac{\nu\circ\sigma}{\nu\circ\sigma+c^2}\Big(\frac{\omega_{01}\circ\omega_{01}}{c^2}\big(\frac{c^2}{\mu\circ\sigma}\mu^{\alpha}-\sigma^{\alpha}\big)-\frac{\omega_{02}\circ\omega_{01}}{c^2}\big(\frac{c^2}{\nu\circ\sigma}\nu^{\alpha}-\sigma^{\alpha}\big)\Big)}{1+\omega_{02}\circ\omega_{01}/c^2}.
\end{equation}
And the scalar products are equal to:
\begin{equation}
    \omega_{01}\circ\omega_{01}=-v^2=\frac{c^6}{(\nu\circ\sigma)^2}-c^2, 
\end{equation}
\begin{equation}
    \omega_{02}\circ\omega_{01}=\frac{(\mu\circ\nu)c^4}{(\mu\circ\sigma)(\nu\circ\sigma)}-c^2. 
\end{equation}
We can simplify the computational complexity by introducing auxiliary symbols:  $x:=\nu\circ\sigma/c^2,y:=\mu\circ\sigma/c^2, z:=\mu\circ\nu/c^2$. Then:
\begin{equation}
    \varepsilon^{\alpha}=\frac{\frac{1}{y}\mu^{\alpha}-\frac{1}{x}\nu^{\alpha}+\frac{x}{x+1}\Big(\frac{1-x^2}{x^2}\big(\frac{1}{y}\mu^{\alpha}-\sigma^{\alpha}\big)-\frac{z-xy}{xy}\big(\frac{1}{x}\nu^{\alpha}-\sigma^{\alpha}\big)\Big)}{1+\frac{z-xy}{xy}},
\end{equation}
which after transformation gives the expression:
\begin{equation}
    \varepsilon^{\alpha}=\frac{\mu^{\alpha}-\frac{y+z}{x+1}\nu^{\alpha}+\frac{xz-y}{x+1}\sigma^{\alpha}}{z},
\end{equation}
equivalent to thesis of theorem. \qed
\end{proof}

The space-like 4D relative velocity of Einstein--Oziewicz can now be assigned to the 4D time-like relative velocity (Minkowski type): 
\begin{equation}
\label{def beta}
    \beta^{\alpha}(\sigma,\nu,\mu):=\big(\varepsilon(\sigma,\nu,\mu) \ \langle + | \ \sigma\big)^{\mu}=\frac{\varepsilon^{\mu}+\sigma^{\mu}}{\sqrt{1+\varepsilon\circ\varepsilon/c^2}}=:(\mu\boxminus^{\sigma}_M\nu)^{\alpha}.
\end{equation}
The velocity of $\beta^{\alpha}$ will be called the Einstein-Minkowski relative velocity. Note that the operation $\boxminus^{\sigma}_M$ which we will call (ternary or pseudo-binary) Minkowski velocities subtraction is an internal operation. Explicitly calculating the Einstein--Minkowski velocity is not so difficult now, but it is related to the simplest form of the covariant Lorentz transformation -- therefore it will be given in the form of the following theorem:

\begin{theorem}[Covariant passive Lorentz transformation of velocity]
The time-like Einstein--Minkowski relative velocity $\beta^{\alpha}$ is a passive Lorentz transformation (boost) of the Minkowski velocity $\mu^{\alpha}$ to a system with the Minkowski velocity $\nu^{\alpha}$ from the reference system with the Minkowski velocity of $\sigma^{\alpha}$, and has the following form: 
\begin{equation}
\label{beta}
    \beta^{\alpha}(\sigma,\nu,\mu)=(\mu \boxminus_M^{\sigma} \nu)^{\alpha}=Lp^{\alpha}_{\beta}(\sigma,\nu)\mu^{\beta}=\mu^{\alpha}-\frac{\mu\circ(\nu+\sigma)}{\sigma\circ(\nu+\sigma)}(\nu^{\alpha}+\sigma^{\alpha})+2\frac{\mu\circ\nu}{c^2}\sigma^{\alpha}.
\end{equation}
\end{theorem}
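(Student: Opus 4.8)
The plan is to start from the definition (\ref{def beta}), $\beta^{\alpha}=(\varepsilon^{\alpha}+\sigma^{\alpha})/\sqrt{1+\varepsilon\circ\varepsilon/c^2}$, and feed into it the explicit Einstein--Oziewicz velocity (\ref{teza 3}) already proved above. The whole theorem then collapses onto a single scalar computation: the value of the normalising denominator $\sqrt{1+\varepsilon\circ\varepsilon/c^2}$. Indeed, (\ref{teza 3}) carries an overall prefactor $c^2/(\mu\circ\nu)$ in front of its bracket plus a trailing $\sigma^{\alpha}$, so multiplying $\varepsilon^{\alpha}+\sigma^{\alpha}$ by $\mu\circ\nu/c^2$ cancels that prefactor, while the trailing $\sigma^{\alpha}$ of (\ref{teza 3}) together with the added $\sigma^{\alpha}$ combine into the coefficient $2\,\mu\circ\nu/c^2$ on $\sigma^{\alpha}$ --- which is exactly the right-hand side of (\ref{beta}). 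Hence it suffices to show $\sqrt{1+\varepsilon\circ\varepsilon/c^2}=c^2/(\mu\circ\nu)$, i.e. that $\beta^{\alpha}=\tfrac{\mu\circ\nu}{c^2}(\varepsilon^{\alpha}+\sigma^{\alpha})$.

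The core of the argument is therefore the invariant $\varepsilon\circ\varepsilon$. First I would record the orthogonality $\varepsilon\circ\sigma=0$: dotting (\ref{teza 3}) with $\sigma^{\alpha}$ and using $\sigma\circ\sigma=c^2$ and $\sigma\circ(\nu+\sigma)=\nu\circ\sigma+c^2$, the bracket contributes $\mu\circ\sigma-\mu\circ(\nu+\sigma)=-\mu\circ\nu$, which cancels the trailing $c^2$. This orthogonality annihilates the bare $\sigma^{\alpha}$ in (\ref{teza 3}) when forming $\varepsilon\circ\varepsilon$, leaving
\begin{equation}
    \varepsilon\circ\varepsilon=\frac{c^2}{\mu\circ\nu}\Big[\varepsilon\circ\mu-\frac{\mu\circ(\nu+\sigma)}{\sigma\circ(\nu+\sigma)}\,\varepsilon\circ\nu\Big],
\end{equation}
so the task reduces to the two projections $\varepsilon\circ\nu$ and $\varepsilon\circ\mu$. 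Using $\nu\circ(\nu+\sigma)=\sigma\circ(\nu+\sigma)$ one finds $\varepsilon\circ\nu=\nu\circ\sigma-c^2(\mu\circ\sigma)/(\mu\circ\nu)$, and $\varepsilon\circ\mu$ follows the same way. This is the step I expect to be the main obstacle --- it is the heaviest algebra in the proof. To keep it in hand I would introduce the abbreviations $x=\nu\circ\sigma/c^2$, $y=\mu\circ\sigma/c^2$, $z=\mu\circ\nu/c^2$ already used in the preceding proof; the two projections then combine so that a common factor $(x+1)$ cancels and the $yz$ terms telescope, giving the clean result $\varepsilon\circ\varepsilon=c^2(1-z^2)/z^2=c^6/(\mu\circ\nu)^2-c^2$. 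As a shortcut one may instead invoke the already-quoted identity $\varepsilon\circ\varepsilon=\omega\circ\omega=c^6/(\mu\circ\nu)^2-c^2$, the equality of the Einstein--Oziewicz and binary norms noted above.

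With $\varepsilon\circ\varepsilon$ in hand, $1+\varepsilon\circ\varepsilon/c^2=c^4/(\mu\circ\nu)^2$, whose positive square root is $c^2/(\mu\circ\nu)$ (positive since $\mu\circ\nu>0$ for future-directed time-like vectors). Substituting into (\ref{def beta}) gives $\beta^{\alpha}=\tfrac{\mu\circ\nu}{c^2}(\varepsilon^{\alpha}+\sigma^{\alpha})$, and inserting (\ref{teza 3}) and collecting the $\sigma^{\alpha}$ terms produces exactly (\ref{beta}). As a consistency check, $\beta\circ\beta=c^2$ follows at once from $\varepsilon\circ\sigma=0$, confirming that $\beta^{\alpha}$ is a genuine Minkowski (unit time-like) velocity.

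Finally I would identify the resulting expression with the covariant passive boost $Lp^{\alpha}_{\beta}(\sigma,\nu)\mu^{\beta}$. The formula (\ref{beta}) has precisely the structure of the active transformation (\ref{La}) with the boost parameter $\zeta$ replaced by $\nu$, except that $\sigma$ and the boost vector are interchanged in the last term: (\ref{La}) carries $+2\frac{\nu\circ\sigma}{c^2}\zeta^{\alpha}$, whereas (\ref{beta}) carries $+2\frac{\mu\circ\nu}{c^2}\sigma^{\alpha}$. This interchange is exactly the passive/inverse relation $La(\sigma,\zeta)^{-1}=Lp(\sigma,\zeta)$ encoded in (\ref{zeta}). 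The identification can be made rigorous by verifying that this map is the inverse of $La(\sigma,\nu)$ on four-vectors, which completes the proof.
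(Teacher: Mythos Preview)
Your proposal is correct and follows essentially the same route as the paper: both reduce the derivation of (\ref{beta}) to the single scalar computation $\varepsilon\circ\varepsilon=c^6/(\mu\circ\nu)^2-c^2$, then feed (\ref{teza 3}) into (\ref{def beta}) with $\mu\circ\nu>0$. You supply more detail on that norm (the orthogonality $\varepsilon\circ\sigma=0$ and the projections $\varepsilon\circ\nu$, $\varepsilon\circ\mu$), whereas the paper simply states the result and points to the identity with the binary norm that you also mention as a shortcut.

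The one place where your argument and the paper's diverge is the justification that the resulting map is the \emph{passive} boost. The paper does not appeal to the structure of (\ref{La}) here; instead it argues that $\varepsilon^{\alpha}$ was constructed directly from Einstein's subtraction (\ref{Einstein minus}), hence its time-like lift is the passive transformation of $\mu^{\alpha}$, and then pins down the passive (rather than active) sense by evaluating at special arguments: $\beta(\sigma,\nu,\nu)=\nu$, $\beta(\sigma,\sigma,\mu)=\mu$, and crucially $\beta(\sigma,\nu,\sigma)=\tilde{\nu}$ (the 3D-opposite of $\nu$), which an active boost would instead send to $\nu$. Your structural comparison with (\ref{La}) is fine but leans on material established only in the subsequent theorem; the paper's diagnostic via special values is self-contained and slightly cleaner at this point in the logical order.
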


\begin{proof}
In order to apply the definition (\ref{def beta}) we first calculate the square of the Einstein--Oziewicz velocity norm:
\begin{equation}
    \varepsilon\circ\varepsilon=\frac{c^6}{(\mu\circ\nu)^2}-c^2,
\end{equation}
which is the same as for the  Oziewicz--Świerk--Bol\'os binary velocity and with the accuracy to the sign as for the Einstein velocities composition (see Statement 2 in \cite{APP A}). Since $\mu\circ\nu>0$ then applying (\ref{def beta}) to (\ref{teza 3}) we directly obtain the last formula (\ref{beta}) of the thesis according to  \cite{Krause 2}.

It remains to be explained that the Einstein--Minkowski velocity is a Lorentz boost of velocity $\mu^{\alpha}$ into the system with the velocity $\nu^ {\alpha}$ (passive transformation $Lp$).  Indeed it is so, because Einstein--Minkowski velocity is a time-like 4D notation analogous to velocity subtraction (\ref{Einstein minus}) or (\ref{Einstein 4D}). Nevertheless, it is still worth making sure that we are dealing with a passive (and not active) transformation, for which $\mu^{\alpha}=\nu^{\alpha}$ does not change:
\begin{equation}
    \beta^{\alpha}(\sigma,\nu,\nu)=Lp^{\alpha}_{\beta}(\sigma,\nu)\ \nu^{\beta}=\nu^{\alpha}
\end{equation}
The condition of identity transformation also occurs when the passive Lorentz boost does not change the main (selected) frame of reference  $\nu^{\alpha}=\sigma^{\alpha}$:
\begin{equation}
    \beta^{\alpha}(\sigma,\sigma,\mu)=Lp^{\alpha}_{\beta}(\sigma,\sigma)\ \mu^{\beta}=\mu^{\alpha}.
\end{equation}
On the other hand, a passive boost of nominally zero velocity $\sigma^{\alpha}$ with velocity $\nu^{\alpha}$ gives the opposite velocity in a three-dimensional sense  $\tilde{\nu}^{\alpha}=(\gamma_{v}c,-\gamma_v\vec{v})$: 
\begin{equation}
\label{tilda}
    \beta^{\alpha}(\sigma,\nu,\sigma)=Lp^{\alpha}_{\beta}(\sigma,\nu)\ \sigma^{\beta}=2\frac{\sigma\circ\nu}{c^2}\sigma^{\alpha}-\nu^{\alpha}=\tilde{\nu}^{\alpha},
\end{equation}
when active transformation would give ordinary velocity $\nu^{\alpha}$. This fact concludes the proof of the theorem.
\qed
\end{proof}

On the basis of the covariant passive Lorentz transformation, one can easily structure the inverse transformation and thus the active transformation. Despite the simplicity of calculations, it is worth presenting these properties in the form of a theorem. 

\begin{theorem}[Covariant active and inverse Lorentz transformations]
If the Lorentz boost is parameterized with two Minkowski velocities $\sigma^{\alpha}$ and $\nu^{\alpha}$, then swapping these velocities leads to an inverse transformation or equivalently converts the passive transformation $Lp$ into the active transformation $La$ (and vice versa):  
\begin{equation}
\label{T5i}
    Lp^{-1}(\sigma,\nu)=Lp(\nu,\sigma)=La(\sigma,\nu),
\end{equation}
\begin{equation}
\label{T5ii}
    La^{-1}(\sigma,\nu)=La(\nu,\sigma)=Lp(\sigma,\nu).
\end{equation}
 The active covariant Lorentz transformation of Minkowski velocity explicitly takes the following form: 
\begin{equation}
\label{La}
    La^{\alpha}_{\beta}(\sigma,\nu)\ \mu^{\beta}=\mu^{\alpha}-\frac{\mu\circ(\nu+\sigma)}{\sigma\circ(\nu+\sigma)}(\nu^{\alpha}+\sigma^{\alpha})+2\frac{\mu\circ\sigma}{c^2}\nu^{\alpha},
\end{equation}
which differs from the passive transformation only in the last expression.
\end{theorem}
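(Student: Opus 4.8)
The plan is to build everything from the explicit passive formula (\ref{beta}) of the preceding theorem. I abbreviate $p^{\alpha}:=\nu^{\alpha}+\sigma^{\alpha}$ and $D:=\sigma\circ(\nu+\sigma)$, so that the passive boost is the tensor $Lp^{\alpha}{}_{\beta}(\sigma,\nu)=\delta^{\alpha}_{\beta}-p^{\alpha}p_{\beta}/D+2\sigma^{\alpha}\nu_{\beta}/c^{2}$. Reading $La(\sigma,\nu)$ as the inverse of the passive boost, consistent with the active--passive duality (\ref{zeta}), the first equality in each of (\ref{T5i}) and (\ref{T5ii}) is inherited from that duality; the genuine content is the swap rule $Lp^{-1}(\sigma,\nu)=Lp(\nu,\sigma)$ together with the explicit form (\ref{La}).

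First I would compute the inverse $Lp^{-1}(\sigma,\nu)$ directly. Since $Lp(\sigma,\nu)$ preserves the Minkowski metric, its inverse is obtained by index transposition, $(Lp^{-1})^{\alpha}{}_{\beta}=Lp_{\beta}{}^{\alpha}$. Applied to the rank-structured expression above, the Kronecker term and the symmetric projector $p^{\alpha}p_{\beta}/D$ are unchanged, while the final term $2\sigma^{\alpha}\nu_{\beta}/c^{2}$ transposes to $2\nu^{\alpha}\sigma_{\beta}/c^{2}$. Contracting with $\mu^{\beta}$ then yields precisely $\mu^{\alpha}-\frac{\mu\circ(\nu+\sigma)}{\sigma\circ(\nu+\sigma)}(\nu^{\alpha}+\sigma^{\alpha})+2\frac{\mu\circ\sigma}{c^{2}}\nu^{\alpha}$, which is (\ref{La}); this simultaneously establishes the explicit formula for $La$ and confirms that it differs from $Lp$ only in the trailing term.

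It remains to identify this with $Lp(\nu,\sigma)$. Here I would simply interchange $\sigma\leftrightarrow\nu$ in (\ref{beta}): the projector numerator and direction are symmetric in the two vectors, the trailing term becomes $2(\mu\circ\sigma)\nu^{\alpha}/c^{2}$, and the denominator becomes $\nu\circ(\sigma+\nu)$. The key step is the normalization identity $\nu\circ(\sigma+\nu)=c^{2}+\sigma\circ\nu=\sigma\circ(\nu+\sigma)$, valid because $\nu\circ\nu=\sigma\circ\sigma=c^{2}$; this collapses the swapped denominator onto $D$, so $Lp(\nu,\sigma)$ coincides term by term with (\ref{La}). Chaining the two computations gives $Lp^{-1}(\sigma,\nu)=Lp(\nu,\sigma)=La(\sigma,\nu)$, i.e. (\ref{T5i}). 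The relations (\ref{T5ii}) then follow either by interchanging $\sigma$ and $\nu$ throughout (\ref{T5i}), giving $La(\nu,\sigma)=Lp(\sigma,\nu)$, or by taking inverses of each side and using $La^{-1}=Lp$ once more.

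I expect the only real obstacle to be the justification that $Lp(\sigma,\nu)$ lies in the Lorentz group, which is what licenses the transpose rule for the inverse. This can be imported from its construction in (\ref{def beta})--(\ref{beta}) as a boost, or verified in a line of bookkeeping by checking $Lp^{\alpha}{}_{\mu}Lp^{\beta}{}_{\nu}\eta_{\alpha\beta}=\eta_{\mu\nu}$, where the cross terms cancel on using $p=\nu+\sigma$, the normalizations, and $\sigma\circ p=\nu\circ p=c^{2}+\sigma\circ\nu$; everything else is the same normalization identity reused. If one prefers to bypass the group-membership step, an equivalent route is to verify $Lp(\sigma,\nu)\,La(\sigma,\nu)=\mathrm{Id}$ by direct multiplication of the two explicit matrices, at the cost of a slightly longer but entirely mechanical computation.
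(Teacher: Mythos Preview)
Your argument is correct but proceeds along a genuinely different line from the paper. The paper obtains $Lp^{-1}(\sigma,\nu)$ physically, by applying $Lp$ with the three-dimensionally reversed boost velocity $\tilde{\nu}^{\alpha}=2(\sigma\circ\nu/c^{2})\sigma^{\alpha}-\nu^{\alpha}$ from (\ref{tilda}) and simplifying the resulting expression with the shorthand $x,y,z$; only afterwards does it observe that the outcome equals (\ref{beta}) with $\sigma$ and $\nu$ swapped. You instead write $Lp$ as the rank-structured tensor $\delta-p\,p/D+2\sigma\,\nu/c^{2}$ and invoke the Lorentz-group identity $(L^{-1})^{\alpha}{}_{\beta}=L_{\beta}{}^{\alpha}$, so that inversion reduces to transposing the one asymmetric term; the swap then follows from the normalization identity $\nu\circ(\sigma+\nu)=\sigma\circ(\nu+\sigma)$. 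Your route is shorter and more structural, isolating exactly why only the trailing term changes; the paper's route avoids having to certify that $Lp(\sigma,\nu)$ is a metric isometry (since the opposite-velocity inverse is taken as a physical given) at the price of a longer explicit calculation. Your acknowledgement that the transpose rule needs the isometry check, together with the alternative of verifying $Lp\cdot La=\mathrm{Id}$ directly, closes the only potential gap.
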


\begin{proof}
The inverse Lorentz transformation is obtained by using a boost with the velocity $\tilde{\nu}^{\alpha}$ opposite in the three-dimensional sense to  $\nu^{\alpha}$:
\begin{equation}
    (Lp^{-1})^{\alpha}_{\beta}(\sigma,\nu) \ \mu^{\beta}=Lp^{\alpha}_{\beta}(\sigma,\tilde{\nu})\ \mu^{\beta}=\mu^{\alpha}-\frac{\mu\circ(\tilde{\nu}+\sigma)}{\sigma\circ(\tilde{\nu}+\sigma)}(\tilde{\nu}^{\alpha}+\sigma^{\alpha})+2\frac{\mu\circ\tilde{\nu}}{c^2}\sigma^{\alpha}.
\end{equation}
Using (\ref{tilda}) and symbols from the proof of theorem 3, we write further the above expression: 
\begin{equation}
    (Lp^{-1})^{\alpha}_{\beta} \mu^{\beta}=\mu^{\alpha}-\frac{2xy-z+y}{2x-x+1}(2x\sigma^{\alpha}-\nu^{\alpha}+\sigma^{\alpha})+2(2xy-z)\sigma^{\alpha},
\end{equation}
which after transformations gives the formula: 
\begin{equation}
    (Lp^{-1})^{\alpha}_{\beta} \mu^{\beta}=\mu^{\alpha}-\frac{z+y-2(x+1)y}{x+1}\nu^{\alpha}-\frac{z+y}{x+1}\sigma^{\alpha},
\end{equation}
which is equal to the right side of the thesis (\ref{La}) of the theorem.  The inverse of the Lorentz transformation automatically is the conversion of the passive transformation to the active transformation (and \textit{vice versa}). Therefore, the third thesis (\ref{La}) is proved. 

Now it is enough to note that the proven expression (\ref{La}) differs from (\ref{beta}) only by swapping  $\sigma^{\alpha}$ and $\nu^{\alpha}$. This proves (\ref{T5i}), (\ref{T5ii}) and ends the proof of the theorem.   
\qed
\end{proof}

The first conclusion from the above theorem is the possibility of determining the internal addition in the Minkowski velocities set: 
\begin{equation}
    (\mu \boxplus_M^{\sigma} \nu)^{\alpha}=La^{\alpha}_{\beta}(\sigma,\nu)\ \mu^{\beta}=\mu^{\alpha}-\frac{\mu\circ(\nu+\sigma)}{\sigma\circ(\nu+\sigma)}(\nu^{\alpha}+\sigma^{\alpha})+2\frac{\mu\circ\sigma}{c^2}\nu^{\alpha}.
\end{equation}
The Minkowski velocities subtraction  described earlier results the  Einstein--Minkowski relative velocity $\beta^{\alpha}$, but result of the addition  appears to be something different velocity. Nevertheless, this result of addition will not be separately named here and denoted by a new symbol. Surprisingly, however, it can be interpreted as a relative velocity, but relative to another body in a different frame of reference: 
\begin{equation}
    (\mu \boxplus_M^{\sigma} \nu)^{\alpha}=(\mu \boxminus_M^{\nu} \sigma)^{\alpha}=\beta^{\alpha}(\nu,\sigma,\mu).
\end{equation}
The above relation shows the flexibility and importance of the concept of Einstein--Minkowski relative velocity in the context of Ternary Special Relativity. 

An analogous second conclusion of the theorem will be the definition of addition as opposed to subtraction that creates a space-like Einstein--Oziewicz velocity (expressed in terms of Minkowski velocities):
\begin{equation}
    (\mu [-]_0^{\sigma} \tilde{\nu})^{\alpha}=:(\mu [+]_0^{\sigma} \nu)^{\alpha}:=\big(\beta(\nu,\sigma,\mu)\dsub\sigma\big)^{\alpha}.
\end{equation}
Both ways of defining (on the left and on the right) are equivalent and lead to the formula: 
\begin{equation}
    (\mu [+]_0^{\sigma} \nu)^{\alpha}=\frac{c^4}{2(\nu\circ\sigma)(\mu\circ\sigma)-(\mu\circ\nu) c^2}\Big[\mu^{\alpha}-\frac{\mu\circ(\nu+\sigma)}{\sigma\circ(\nu+\sigma)}(\nu^{\alpha}+\sigma^{\alpha})+2\frac{\mu\circ\sigma}{c^2}\nu^{\alpha}\Big]-\sigma^{\alpha}.
\end{equation}
The above formula, given without proof, no longer shows the simplicity and symmetry of the previous formulas.  However, the Einstein--Oziewicz velocity expressed by the operation of Oziewicz--Einstein on binary velocities has a trivial difference between addition and subtraction:
\begin{equation}
    (\omega_{01} \boxplus_0^{\sigma} \omega_{02})^{\alpha}\equiv(\omega_{01} \boxplus_O^{\sigma} \omega_{02})^{\alpha}:=(\omega_{01} \boxminus_0^{\sigma} -\omega_{02})^{\alpha}\equiv(\omega_{01} \boxminus_O^{\sigma} -\omega_{02})^{\alpha}.
\end{equation}

At the end of the work, the inverse velocity theorem for the Lorentz boost, which is the key Ternary Special Relativity theorem, will be formulated and proven. It is not about the opposite Minkowski velocity $\tilde{\nu}^{\alpha}$ generating the inverse Lorentz transformation, but the inverse velocity in terms of a parameter for a given Lorentz boost. The term inverse velocity is borrowed from Oziewicz, who used it both in the context of binary relative velocity and in the context of the Lorentz transformation leading to ternary velocity.

\begin{theorem}[Inverse velocity of the covariant Lorentz transformation]
\label{Theorem inverse velocity}
If a passive Lorentz boost $Lp(\sigma,\nu)$ converts Minkowski velocity $\mu^{\alpha}$ to velocity $\beta^{\alpha}$, then the parameter of this boost, being Minkowski velocity, is expressed as follows :
\begin{equation}
\label{inverse}
    Lp^{\alpha}_{\beta}(\sigma,\nu) \mu^{\beta}=\beta^{\alpha} \ \rightarrow \ \nu^{\alpha}(\sigma,\beta,\mu)=\frac{(\mu\circ\sigma+\beta\circ\sigma)(\mu^{\alpha}-\beta^{\alpha})+\big(2\frac{(\beta\circ\sigma)^2}{c^2}+\mu\circ\beta-c^2\big)\sigma^{\alpha}}{2(\mu\circ\sigma)(\beta\circ\sigma)/c^2-\mu\circ\beta+c^2},
\end{equation}
and can be treated as some kind of subtraction of $\mu^{\alpha}$ and $\beta^{\alpha}$, just as $\beta^{\alpha}$ is a specific subtraction of $\mu^{\alpha}$ and $\nu^{\alpha}$. 
\end{theorem}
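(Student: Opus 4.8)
The plan is to treat the defining condition $Lp^{\alpha}_{\beta}(\sigma,\nu)\mu^{\beta}=\beta^{\alpha}$ as an algebraic equation for the unknown parameter $\nu^{\alpha}$, starting from the explicit passive formula (\ref{beta}). Writing $A:=\mu\circ(\nu+\sigma)$, $D:=\sigma\circ(\nu+\sigma)$ and $P:=\mu\circ\nu$, that formula reads $\beta^{\alpha}=\mu^{\alpha}-\frac{A}{D}(\nu^{\alpha}+\sigma^{\alpha})+2\frac{P}{c^2}\sigma^{\alpha}$. Since $\nu^{\alpha}$ enters linearly (with coefficient $-A/D$) apart from its appearance inside the scalars $A,D,P$, I would first solve for it formally:
\begin{equation}
\nu^{\alpha}=\frac{D}{A}\big(\mu^{\alpha}-\beta^{\alpha}\big)+\Big(\frac{2DP}{Ac^2}-1\Big)\sigma^{\alpha}.
\end{equation}
The only obstacle is that $A$, $D$ and $P$ still secretly depend on the unknown $\nu^{\alpha}$; the heart of the proof is to re-express each of them through inner products of the \emph{known} vectors $\sigma^{\alpha},\mu^{\alpha},\beta^{\alpha}$ alone.

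To eliminate this hidden dependence I would contract the defining equation with $\sigma$ and with $\mu$. Contracting with $\sigma$ and using $\sigma\circ\sigma=c^2$ together with $D=\nu\circ\sigma+c^2$ and $A=P+\mu\circ\sigma$ causes almost everything to cancel, leaving the clean key relation $\beta\circ\sigma=\mu\circ\nu=P$. This is the decisive step, since it at once fixes $P=\beta\circ\sigma$ and hence $A=\mu\circ\sigma+\beta\circ\sigma$ in known terms. Contracting next with $\mu$ and using $\mu\circ\mu=c^2$ and $\nu\circ\mu+\sigma\circ\mu=A$ gives $\beta\circ\mu=c^2-\frac{A^2}{D}+\frac{2P(\mu\circ\sigma)}{c^2}$, which solves for the last unknown scalar:
\begin{equation}
\frac{A^2}{D}=c^2-\mu\circ\beta+\frac{2(\mu\circ\sigma)(\beta\circ\sigma)}{c^2}=:Q,
\end{equation}
so that $D/A=A/Q$, where $Q$ is precisely the denominator appearing in (\ref{inverse}).

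It then remains to substitute $D/A=A/Q$, $P=\beta\circ\sigma$ and $A=\mu\circ\sigma+\beta\circ\sigma$ into the formal solution and to simplify the coefficient of $\sigma^{\alpha}$. A short computation gives
\begin{equation}
\frac{2AP}{Qc^2}-1=\frac{2(\beta\circ\sigma)^2+(\mu\circ\beta)c^2-c^4}{Qc^2}=\frac{2\frac{(\beta\circ\sigma)^2}{c^2}+\mu\circ\beta-c^2}{Q},
\end{equation}
which reproduces exactly the $\sigma^{\alpha}$-term of (\ref{inverse}), while the coefficient $A/Q=(\mu\circ\sigma+\beta\circ\sigma)/Q$ of $\mu^{\alpha}-\beta^{\alpha}$ matches its first term. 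This yields the claimed formula.

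I expect the main obstacle to be conceptual rather than computational: recognising that only two contractions (with $\sigma$ and with $\mu$) are needed to remove the entire hidden $\nu$-dependence, and that they collapse to the strikingly simple identities $\beta\circ\sigma=\mu\circ\nu$ and $A^2/D=Q$. As a safeguard I would finally verify that the resulting $\nu^{\alpha}$ is a genuine future-pointing Minkowski velocity, i.e. $\nu\circ\nu=c^2$; this is guaranteed because $Lp(\sigma,\nu)$ is a Lorentz isometry (so that $\beta\circ\beta=\mu\circ\mu=c^2$ is automatic), but it can also be checked directly from the final expression as a consistency test.
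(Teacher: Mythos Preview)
Your proof is correct and reaches the target cleanly. It is close in spirit to the paper's argument but differs in execution in a way worth noting.

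The paper begins from the general ternary ansatz $\nu^{\alpha}=f_2\mu^{\alpha}-f_1\beta^{\alpha}-f_0\sigma^{\alpha}$ and first argues separately, via the 3D identity $Lp_{-\vec v}(\vec b)=\vec u$, that $f_1=f_2$ (condition viii.). It then computes the two scalar relations $\beta\circ\sigma=\mu\circ\nu$ and $\beta\circ\nu=2(\mu\circ\nu)(\nu\circ\sigma)/c^2-\mu\circ\sigma$, substitutes the ansatz into them, and solves the resulting $2\times2$ linear system for $f_0,f_1$ by determinants. Your route avoids both the ansatz and the antisymmetry side-argument: by directly rearranging (\ref{beta}) you obtain a formal expression for $\nu^{\alpha}$ in which the equality of the $\mu^{\alpha}$ and $-\beta^{\alpha}$ coefficients is automatic, and you then contract only with the \emph{known} vectors $\sigma$ and $\mu$ (rather than the unknown $\nu$), which yields $P=\beta\circ\sigma$ and $A^2/D=Q$ without any linear system to solve. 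What your approach buys is economy and a self-contained 4D argument; what the paper's approach buys is the explicit link to the general ternary template (\ref{general ternary}) and the antisymmetry property highlighted elsewhere in the text.
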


\begin{proof}
We need to solve the equation (\ref{beta}) for $\beta^{\alpha}$ treating $\nu^{\alpha}$ as the unknown. Starting from the general ternary formula  (\ref{general ternary}):
\begin{equation}
\label{anzac}
    \nu^{\alpha}(\sigma, \beta, \alpha)=f_2 \mu^{\alpha}-f_1 \beta^{\alpha}-f_0\sigma^{\alpha},
\end{equation}
it is worth noting that $f_2=f_1$ (condition viii. holds). The justification for this fact is the simplest in 3D.  It is known that if $Lp_{\vec{v}}(\vec{u})=\vec{b}$ then $Lp_{-\vec{v}}(\vec{b})=\vec{u}$. This means the antisymmetry of the three-dimensional inverse velocity $\vec{v}(\vec{b},\vec{u})=-\vec{v}(\vec{u},\vec{b})$, which is equivalent to $f_2=f_1$.

So it is enough to write the equations for the two unknown coefficients $f_0$ and $f_1$. For this, we calculate the two scalar products resulting from the formula $\beta^{\alpha}$ (\ref{beta}):
\begin{equation}
    \beta\circ\sigma=\mu\circ\nu,
\end{equation}
\begin{equation}
    \beta\circ\nu=2(\mu\circ\nu)(\nu\circ\sigma)/c^2-\mu\circ\sigma.
\end{equation}
By inserting the formula (\ref{anzac}) into these two equations and using simplifying symbols  $b:=\beta\circ\sigma/c^2, m:=\mu\circ\sigma/c^2, n:=\mu\circ\beta/c^2$ we get:
\begin{equation}
    b=(1-n)f_1-m \ f_0,
\end{equation}
\begin{equation}
    (n-1)f_1-b \ f_0=2b\big((m-b)f_1-f_0\big)-m.
\end{equation}
The solution of this linear system of equations (e.g. by the determinant method) leads, after simplification, to the result: 
\begin{equation}
    f_1=\frac{b+m}{1-n+2bm},
\end{equation}
\begin{equation}
    f_0=\frac{1-n-2b^2}{1-n+2bm},
\end{equation}
which is equivalent to the thesis of the theorem. 
\qed
\end{proof}

In the case of the inverse velocity for the active transformation $La(\sigma,\nu)$ it is enough to replace the arguments in the resulting formula  $\nu^{\alpha}(\sigma,\beta,\mu)\longrightarrow \nu^{\alpha}(\sigma,\mu,\beta)$.

The Oziewicz--Ungar--Dragan ternary relative velocity from the previous section is based on the above theorem. The application of the covariant Lorentz transformation allowed for a significant simplification of the proof of the theorem. For example, in Dragan \cite{Dragan}, the calculations extend through three subsections (it does not apply to the “magic four-rule”), and the previous relatively simple heuristic derivation of the author \cite{APP A} was not fully general and had to be based on a computationally complex  proof of the relevant lemma. However, the 3D derivation presented in the proof of Theorem \ref{Theorem W} turned out to be the simplest.

%\section{Pseudo-binary 4D cross relative velocity as a generalization of jet velocity (3D binary velocity)}
%\section{4D generalization of Fernandez-Guasti axial relative velocity}
%\section{Abstract definition of the general quadruple composition of two 4D ternary velocities}

%\begin{figure}
%\vspace{2.5cm}
%\caption{This is the caption of the figure displaying a white eagle and
%a white horse on a snow field}
%\end{figure}

\subsubsection{Conclusions of this work.}

The only covariant relative velocity that does not depend on the selected frame of reference  is Oziewicz--Świerk--Bolós (and Matolcsi or Bini--Carini--Jantzen) velocity (see summary Tab. \ref{velocities}). In the sense of space-like velocity, it is a standard velocity, because it is normalized to the real speed of relative motion (Minkowski time-like 4D velocity is normalized to the speed of light). Nevertheless, the composition of binary velocities has a ternary character, i.e. it additionally depends on the selected reference system.  It turns out that virtually every comprehensive relation of motion is ternary, starting with the Lorentz transformation of velocity. The truly covariant Lorentz transformation leads to a space-like relative velocity of Einstein--Oziewicz, which is ternary in nature (or at least pseudo-binary). Also the time-like equivalent of this velocity (called the Einstein--Minkowski velocity) is characterized by a certain ternarity. 

However, the role of the canonical ternary velocity is played by the relative velocity of Oziewicz--Ungar--Dragan (also Celakoska--Chakmakov--Petrushevski and Urbantke), which is based on the so-called inverse velocity in terms of the covariant Lorentz transformation parameter. The relative ternary velocity of Oziewicz--Ungar--Dragan is the only one of the considered ones to be characterized by antisymmetry, which seems obligatory for relative velocities. Even Oziewicz-Minkowski velocity, which is a time-like equivalent of canonical ternary velocity, does not have this property. Composition of ternary velocities in the Oziewicz sense takes the form of Einstein velocity composition generalized to 4D. Surprisingly, such folding is not ternary -- it does not depend on the third parameter, other than on the two folding velocities. However, this Oziewicz--Einstein ternary composition is not a fully general case of composition of relative ternary velocities based on transitivity. This fact means that Oziewicz's research program, called here Ternary Special Relativity (TSR), is rich enough and still open.

Another open issue for TSR remains the form of associativity composition of velocities. As an example of such composition, the clarified  composition of binary velocities in the Oziewicz sense, based on transitivity, was given.

\begin{table}[h!]
\centering
\caption{List of various 4D relative velocities of a body with Minkowski velocity $\mu^{\alpha}$ relative to a body with Minkowski velocity $\nu^{\alpha}$, in a frame of reference with Minkowski velocity $\sigma^{\alpha}$. }
%\begin{ruledtabular}
\begin{tabular}{l c c c c}
\hline
\hline
% The codes above determine the horizontal alignment in each column.
% Options are l (left), r (right), c (centered), and p (paragraph).
% The p option allows an entry to be broken into multiple lines, and
% therefore requires a width specification, in this case 5 centimeters.
Velocity name & Designation & Direction  & Type & Boost\\
\hline	% horizontal line to separate headings from data
Oziewicz--Świerk--Bolós & $\omega^{\alpha}(\nu,\mu)$ &  space-like & binary & special parameter\\
Oziewicz--Ungar--Dragan & $\xi^{\alpha}(\sigma,\nu,\mu)$ &  \ \ space-like & \ \ canonical ternary & \ \ general parameter\\
Oziewicz--Minkowski$^*$ & $\zeta^{\alpha}(\sigma,\nu,\mu)$ &  \ \ time-like & \ \ ternary & \ \ general parameter\\
Einstein--Oziewicz$^*$ & $\varepsilon^{\alpha}(\sigma,\nu,\mu)$ &  \ \ space-like & \ \ pseudo binary & \ \ not established\\
Einstein--Minkowski & $\beta^{\alpha}(\sigma,\nu,\mu)$ &  \ \ time-like & \ \ pseudo binary & \ \ value\\
\hline
\hline
$^*$ Original authorial velocities
\end{tabular}
%\end{ruledtabular}
\label{velocities}
\end{table}

\subsubsection{Further research.}

The development of the foundations of Oziewicz's research program (Ternary Special Relativity -- TRS) turned out to be so rich that the results of a slightly different nature had to be moved to next publication. Nevertheless, the continuation will refer to the formalism defined in this work.

In continuation of this work will introduce two 4D ternary generalizations of relative binary velocity called pseudo-binary relative velocities: cross and axial. The first pseudo-binary cross velocity is a proper 4D generalization of the author's 3D jet velocity (3D binary velocity). The second is a 4D generalization of the 3D axial velocity of Fernándeuz-Guasti and the author. The cross velocity is a relatively simple modification of the binary velocity, while the second pseudo-binary axial velocity is a bit more complicated ‒ almost like typical ternary velocity. Despite this complexity, axial velocity is a straightforward direct generalization of Einstein composition of velocities in one spatial dimension into the general case of 4D space-time. Additionally, the problem of general composition of ternary velocities will be attacked -- more general than composition in the sense of Oziewicz--Einstein. 

The announced results have been written and calculated in handwritten notes and probably will be presented at the next GOL conference in memory of Zbigniew Oziewicz. If the conference is adjourned, these results will be published differently.

\subsubsection{Acknowledgments.}
Many thanks to Ryszard Kostecki for first showing me Oziewicz's works, thanks to which I was able to establish correspondence with Oziewicz. I would like to thank Dariusz Świerk for finding his master thesis after 33 years, written under the supervision of Oziewicz and containing binary velocity. In addition, I am grateful to Bill Page for the polemical discussion of the novelty of ternary velocity. Finally, thanks to Larissa Sbitneva for discussing the authorship of the first discovery of the non-associativity of the velocity composition.

%
% ---- Bibliography ----
%


\begin{thebibliography}{6}

\bibitem{Abramowicz} Abramowicz, M.A.: Velocity, acceleration and gravity in Einstein’s relativity. arXiv:1608.07136 (2016).

\bibitem{Bazanski} Bażański, S.L.: Decomposition of the Lorentz
Transformation Matrix into Skew-Symmetric Tensors. Journal of Mathematical Physics 6:1201--1203 (1965). \url{doi:10.1063/1.1704760}

\bibitem{Bini} Bini, D., Carini, P., Jantzen, R.T.: Relative observer kinematics in general relativity. Classical and Quantum Gravity 12:2549--2563 (1995). \url{doi:10.1088/0264--9381/12/10/013}

\bibitem{Bolos} Bol\'os, V.J.: Intrinsic definitions of relative velocity in general relativity. Comm
Math Phys 273:217--36 (2007), arXiv:0506032v1 (2005). \url{doi: 	10.1007/s00220-007-0248-9}

\bibitem{Celakoska 1} Celakoska, E.G.: On isometry links between 4-vectors of velocity. Novi Sad Journal of Mathematics 38(3):165–172 (2008).

\bibitem{Celakoska 2} Celakoska, E.G., Chakmakov, D.: Lorentz Link Problem and Solutions. Proceedings of the Fourth International Scientific Conference ”Faculty of Mathematics and Natural Science”, South-West University “Neofit Rilski”, Blagoevgrad, Bulgaria, FMNS2011 Vol.1:16--21 (2011).

\bibitem{Celakoska 3} Celakoska, E.G., Chakmakov, D.: Mathematical model of relativistic 3-acceleration. International Journal of Advances in Applied Mathematics and Mechanics 6(2):14--19, ISSN: 2347-2529 (2018).

\bibitem{Celakoska 4} Celakoska, E.G., Chakmakov, D., Petrushevski, M.: On Parameterization of Lorentz Boost Links. International Journal of Contemporary Mathematical Sciences 10(2):85--90 (2015). \url{doi:10.12988/ijcms.2015.513}

\bibitem{Choi} Choi, Y.-H.: Uniqueness of the isotropic frame and usefulness of the Lorentz transformation. Journal of the Korean Physical Society 72(10):1110--1120 (2018). \url{doi:10.3938/jkps.72.1110}

\bibitem{Dragan} Dragan, A.: Niezwykle Szczególna Teoria Względności. Roz. 3. Obrót Thomasa--Wignera. Roz. 8.2. Magiczna czteroreguła (Unusually Special Theory of Relativity. Chap. 3. Thomas--Wigner Rotation. Chap. 8.2. The magic four-rule). Monograph -- lecture notes: www.researchgate.net/publication/265887295 (2012).

\bibitem{Dragan WS} Dragan, A.: Unusually Special Relativity. World Scientific (2021). \url{doi:10.1142/q0319}

\bibitem{Einstein} Einstein, A.: Zur elektrodynamik bewegter Körper (On the Electrodynamics of Moving Bodies). Annalen der Physik 17:891–921 (1905).

\bibitem{Fahnline} Fahnline, D.E.: Manifestly covariant, coordinate-free dyadic expression for planar homogeneous Lorentz transformations. Journal of Mathematical Physics 24:1080--1086 (1983). \url{doi:10.1063/1.525833}

\bibitem{Korzynski} Grasso, M., Korzyński, M., Serbenta, J.: Geometric optics in general relativity using bilocal operators. Physical Review D 99, 064038:1--32 (2019).

\bibitem{Herr} Herr, W.: Short Overview of Special Relativity and Invariant Formulation of Electrodynamics. CERN Yellow Rep. School Proc. 1:27--54 (2018).
\url{doi:10.23730/CYRSP-2018-001.27}

\bibitem{Hestenes} Hestenes, D.: Proper particle mechanics. Journal of Mathematical Physics 15:1768--1777 (1974). 

\bibitem{Ivezic} Ivezić, T.: The invariant formulation of special relativity,
or the  “true transformations relativity”, and electrodynamics. Annales de la Fondation Louis de Broglie 27(2):287--302 (2002). 

\bibitem{Kocik} Kocik, J.: Making sense of relativistic composition of velocities. arXiv:1910.06785v1 (2019).

\bibitem {APP A}
Koczan, G.M.: Relativistic Relative Velocity and Relativistic Acceleration.
Acta Physica Polonica A. 139(4):401--406 (2021). \url{doi:10.12693/APhysPolA.139.401}

\bibitem {RinP}
Koczan, G.M.: New definitions of 3D acceleration and inertial mass not violating F=MA in the Special Relativity. arXiv:1909.09084v1 (2019). Results in Physics 24(5), 104121:1--19 (2021). \url{doi:10.1016/j.rinp.2021.104121}

\bibitem {poster} Koczan, G.M.: The new definition of three-dimensional
relativistic acceleration and its consequences within
the SR. Poster for 46th Extraordinary
Congress of Polish Physicists on the 100 Years of the Polish
Physical Society, Warsaw, Poland, October 16--18 (2020). https://100lat.ptf.net.pl/plakaty/0286-plakat\_en-
1be04124.pdf, www.researchgate.net/publication/344670531. \url{doi:10.13140/RG.2.2.27955.45603}

\bibitem {poster 2}
Koczan, G.M.: Linear geometrical constructions of the correspondence of 4D and 3D vectors in SR, containing covariant ternary relativistic acceleration. Poster for 47th Congress of Polish Physicists, Bydgoszcz, Poland, September 19--23 (2021). www.researchgate.net/publication/354692005.  \url{doi:10.13140/RG.2.2.33945.95846}.

\bibitem {mgr} Koczan, G.M.: Operator Położenia w Relatywistycznej Mechanice Kwantowej (Position Operator in Relativistic Quantum Mechanics). Master’s thesis University of Warsaw (2002). https://www.fuw.edu.pl/\textasciitilde gkoczan/INNE/mgr.pdf, www.researchgate.net/publication/ 354691914. \url{doi:10.13140/RG.2.2.12135.57769}.

\bibitem{Krause 1} Krause, J.: Lorentz transformations as space-time
reflections. Journal of Mathematical Physics 18:889--893 (1977). \url{doi:10.1063/1.523355}

\bibitem{Krause 2} Krause, J.: Lorentz transformations as space–time
reflections. II. Timelike reflections. Journal of Mathematical Physics 19:370--374 (1978). \url{doi:10.1063/1.523680}

\bibitem{Mat0} Matolcsi, T.: Spacetime without reference frames. Akad\'emiai Kiadó, Budapest (1993), ISBN 963 05 6433 5.

\bibitem{Mat} Matolcsi, T., Goher, A.: Spacetime without reference frames: An application to the velocity addition paradox. Studies in History and Philosophy of Modern Physics 32(1):83--99 (2001).

\bibitem{Oziewicz 0} Oziewicz, Z.: Relativity: groupoid category
and Frobenius algebra, Relative velocity as the categorical morphism, Lorentz-group-free relativity. Unpublished Draft of the Book (March 10, 2004).

\bibitem{Oziewicz 1} Oziewicz, Z.: How do you add relative velocities?. Group Theoretical Methods in Physics, Conference Series 185:439--444, CRC Press (2004). \url{doi:10.1201/9781482269185}

\bibitem{Oziewicz 1.5} Oziewicz, Z.: The Lorentz Boost-Link Is Not Unique, Relative velocity as a morphism in a connected groupoid category of null objects. arXiv:math-ph/0608062 (2006).

\bibitem{Oziewicz 2} Oziewicz, Z.: Relativity groupoid, instead of relativity group. International Journal of Geometric Methods in Modern Physics
04 (05):739–749 (2007). \url{doi:10.1142/S0219887807002260}

\bibitem{Oziewicz 3} Oziewicz, Z.: Ternary relative velocity. arXiv:1104.0682v1 (2011).

\bibitem{Oziewicz 4} Oziewicz, Z.:  Relativity without Lorentz group. Academia (2005/2011): https://www.academia.edu/17230451/Special\_relativity\_without\_Lorentz\_group1. 

\bibitem{Oziewicz eter} Oziewicz, Z.:  Æther Needs Relativity Group:
Lorentz Group if and only if Æther. Academia (2006/2008): https://www.academia.edu/12338869/Æther\_Needs\_Relativity\_Group \_Lorentz\_Group\_if\_and\_only\_if\_Æther. 
 
\bibitem{Oziewicz Page} Oziewicz, Z., Page, W.S.: Concepts of relative velocity. arXiv:1104.0684v1 (2011).

\bibitem{Oziewicz Reviewers} Oziewicz, Z.:  Are peer reviewers Guardians of the Truth?. Proceedings of the Natural Philosophy Alliance, 20th Annual Conference of the NPA 10:228--231, College Park, Maryland, USA (2013).

\bibitem{Oziewicz Subjective} Oziewicz, Z.: Science must never be objective. Proceedings of the XVI-th International Conference, The Science and Quality of Life, Vilnius, Lithuania, Studium Vilnense A 10:3–8 (2013), ISSN 1648-7907.

\bibitem{Sbitneva} Sbitneva L.: Differential equations of smooth loops related to some space-time models: Integrability conditions and geometry. Journal of Knot Theory and Its Ramifications 27(7):1841004 (2018), World Scientific.

\bibitem{Szostek} Szostek, R.: Derivation of all linear transformations that meet the results of Michelson-Morley’s experiment and discussion of the relativity basics. Moscow University Physics Bulletin 75(6):684--704 (2020), ISSN: 0027-1349. \url{doi:10.3103/S0027134920060181}

\bibitem{Swierk} Świerk, D.A.: Teoria względności i struktury produktowe (Relativity theory and product structures). Master Thesis supervised by Z. Oziewicz, University of Wrocław, Institute of Theoretical Physics  (1988), https://www.researchgate.net/publication/356144047.

\bibitem{Tsamparlis} Tsamparlis, M.: Special 
Relativity -- An Introduction with 200 Problems and 
Solutions (Chap. 15. The Covariant Lorentz Transformation). Springer (2010, 2019). \url{doi: 10.1007/978-3-642-03837-2}

\bibitem{Ungar} Ungar, A.A.: The Relativistic Noncommutative Nonassociative Group of Velocities and the Thomas Rotation. Results in
Mathematics 16:168–179 (1989).

\bibitem{Ungar 2} Ungar, A.A.: A note on the Lorentz transformations linking initial and final four‐vectors. Journal of Mathematical Physics 33(84) (1992).

\bibitem{Ungar 2001} Ungar, A.A.: Beyond the Einstein Addition Law and its Gyroscopic Thomas Precession: The Theory of Gyrogroups and Gyrovector Spaces. Kluwer Academic Publishers (2001,2002). \url{doi:10.1007/978-94-010-9122-0}

\bibitem{Ungar 2005} Ungar, A.A.: Analytic hyperbolic geometry: Mathematical foundations and applications. World Scientific Publishing (2005). 

\bibitem{Ungar 2006} Ungar, A.A.: The relativistic proper-velocity transformation group. Progress In Electromagnetics Research 60:85--94 (2006). \url{doi:10.2528/PIER05121501}

\bibitem{Ungar 2007} Ungar, A.A.: Gyrogroups, the grouplike loops in the service of hyperbolic geometry and Einstein’s Special Theory of Relativity. Quasigroups and Related Systems 15:141-168 (2007).

\bibitem{Urbantke} Urbantke H.K.: Lorentz Transformations from Reflections:
Some Applications. arXiv:math-ph/0212038v1 (2002). Foundations of Physics Letters 16(2):111--117 (2003).

\bibitem{Wyk} van Wyk, C.B.: Lorentz transformations in terms of initial and final vectors. Journal of Mathematical Physics 27(5):1306--1314 (1986). \url{doi: 10.1063/1.527136}

\bibitem{Wyk 2} van Wyk, C.B.: The Lorentz operator revisited. Journal of Mathematical Physics 32(2):425--430 (1991). \url{doi: 10.1063/1.529429}



\end{thebibliography}
\end{document}